\newcommand{\A}{A}
\newcommand{\J}{J}
\newcommand{\first}{f}
\newcommand{\second}{s}
\newcommand{\minlabel}{\ensuremath{\lambda_{\mathrm{min}}}}
\newcommand{\equivlabel}{\ensuremath{\lambda_{\mathrm{equiv}}}}
\newcommand{\lab}{\ensuremath{\lambda}}
\newcommand{\wrt}{{w.r.t.}\xspace}
\newcommand{\etal} {{\it et al. }} 
\newtheorem{theorem}{Theorem}
\newtheorem{lemma}{Lemma}
\newtheorem{definition}{Definition}
\newtheorem{invariant}{Invariant}
\newtheorem{observation}{Observation}
\begin{document}

\title{Weighted Popular Matchings \thanks{Research supported by NSF Awards CCR 0113192 and CCF 0430650} }

\author{Juli\'{a}n Mestre \\[1em]
Department of Computer Science \\
 University of Maryland, College Park, MD 20742 \\
{\tt jmestre@cs.umd.edu }
}

\date{}

\maketitle

\begin{abstract}

We study the problem of assigning jobs to applicants. Each applicant has a weight and provides a \emph{preference list}, which may contain ties, ranking a subset of the jobs. An applicant $x$ may prefer one matching over the other (or be indifferent between them, in case of a tie) based on the jobs $x$ gets in the two matchings and $x$'s personal preference. A matching $M$ is \emph{popular} if there is no other matching $M'$ such that the weight of the applicants who prefer $M'$ over $M$ exceeds the weight of those who prefer $M$ over $M'$.

We present algorithms to find a popular matching, or if none exists, to establish so. For instances with strict preference lists we give an $O(n+m)$ time algorithm. For preference lists with ties we give a more involved algorithm that solves the problem in $O( \min(k \sqrt{n}, n) m)$ time, where $k$ is the number of distinct weights the applicants are given.

\end{abstract}

\section{Introduction}

Consider the problem of assigning jobs to applicants where every applicant provides a \emph{preference list}, which may contain ties, ranking a subset of the jobs. More formally, an instance consists of a bipartite graph $H=(\A, \J, E)$ with $n$ vertices and $m$ edges between a set of applicants $\A$ and a set of jobs $\J$. The edge $(x,p)$ belongs to $E$ if job $p$ is on $x$'s preference list. Moreover, every edge $(x,p)$ is assigned a rank $r_x(p) \in Z^+$ encoding the fact that $p$ is $x$'s $r_x(p)$th choice. An applicant $x$ is said to \emph{prefer} job $p$ over $q$ if the edge $(x,p)$ is ranked higher than $(x,q)$, i.e., $r_x(p) < r_x(q)$. If $r_x(p) = r_x(q)$ we have a tie, and we say $x$ is \emph{indifferent} between $p$ and $q$. Likewise, we say $x$ prefers one matching over the other or is indifferent between them based on the jobs $x$ is assigned by the two matchings. Our ultimate goal is to produce a ``good'' matching in $H$.

Despite its simplicity, this framework captures many real-world problems such as the assignment of government-subsidized houses to families \cite{Y96}, the assignment of graduates to training position \cite{HZ79}, and rental markets such as NetFlix \cite{ACKM06} where DVDs must be assigned to subscribers. The issue of what constitutes a fair or good assignment has been studied in the Economics literature \cite{AS98,Y96,Z90}. The least restrictive definition of optimality is that of a \emph{Pareto optimal} matching \cite{ACMM04,AS98}. A matching $M$ is Pareto optimal if there is no matching $M'$ such that at least one person prefers $M'$ over $M$ and nobody prefers $M$ over $M'$. In this paper we study a stronger definition of optimality, that of \emph{popular matchings}. We say $M_1$ is \emph{more popular than} $M_2$ if the applicants who prefer $M_1$ over $M_2$ outnumber those who prefer $M_2$ over $M_1$. A matching $M$ is \emph{popular} if there is no matching more popular than $M$.

Popular matchings were first considered by Gardenfors \cite{G75} who showed that not every instance allows a popular matching. Abraham \etal \cite{AIKM05} gave the first polynomial time algorithms to determine if a popular matching exists and if so, to produce one: An $O(n+m)$ time algorithm for the special case of strict preference lists, and an $O( \sqrt{n} m)$ time algorithm for the general case where ties are allowed. They noted that maximum cardinality matching can be reduced to finding a popular matching in an instance with ties (by letting every edge be of rank 1) thus a linear time algorithm for the general case seems unlikely.

Notice that this definition of popular matching does not make any distinction between the individuals---the opinion of every applicant is valued equally. But what if we had some preferred set of applicants that we would like to give priority over the rest? This option becomes particularly interesting when jobs are scarce or there is a lot of contention for a few good jobs.

To answer this question we propose a new definition for the more popular than relation under which every applicant $x$ is given a positive weight $w(x)$. The \emph{satisfaction} of $M_1$ with respect to $M_2$ is defined as the weight of the applicants that prefer $M_1$ over $M_2$ minus the weight of those who prefer $M_2$ over $M_1$. Then $M_1$ is more popular than $M_2$ if the satisfaction of $M_1$ \wrt $M_2$ is positive. We believe that this is an interesting generalization of popular matchings that addresses the natural need to assign priorities (weights) to the applicants while retaining the one-sided preferences of the original setup.

In this paper we develop algorithms to determine if a given instance allows a weighted popular matching, and if so, to produce one. For the case of strict preference lists we give an $O(n+m)$ time algorithm. When ties are allowed the problem becomes more involved; a second algorithm solves the general case in $O( \min ( k \sqrt{n}, n) m)$ time, where $k$ is the number of distinct weights that the applicants are given.

Our approach is based on deriving a more algorithmic-friendly characterization of popular matchings. Following the line of attack of Abraham \etal \cite{AIKM05} for unweighted instances, we define the notion of \emph{well-formed matchings} and show that every popular matching is well-formed. For unweighted instances one can show \cite{AIKM05} that every well-formed matching is popular. For weighted instances, however, there may be well-formed matchings that are not popular. Our main contribution is to show that these non-popular well-formed matchings can be weeded out by pruning certain bad edges that cannot be part of any popular matching. In other words, we show that the instance can be pruned so that a matching is popular if and only if it is well-formed and is contained in the pruned instance.

\subsection{Related work}

Following the publication of the work of Abraham \etal \cite{AIKM05}, the topic of unweighted popular matchings has been further explored in many interesting directions. Suppose we want to go from an arbitrary matching to some popular matching by a sequence of matchings each more popular than the previous; Abraham and Kavitha~\cite{AK06} showed that there is always a sequence of length at most two and gave a linear time algorithm to find it. One of the main drawbacks of popular matchings is that they may not always exist; Mahdian~\cite{M06} nicely addressed this issue by showing that the probability that a random instance admits a popular matching depends on the ratio $\alpha = \frac{|J|}{|A|}$, and exhibits a phase transition around $\alpha^* \approx 1.42$. Motivated by a house allocation application, Manlove and Sng \cite{MS06} gave fast algorithms for popular assignments with capacities on the jobs.

A closely related, but not equivalent, problem is that of computing a \emph{rank-maximal matching}. Here we want to maximize the number of rank 1 edges, and subject to this, maximize the number of rank 2 edges, and so on. Irving \etal \cite{IKMMP06} showed how to solve this problem in $O( \min(C \sqrt{n}), n) m)$ time where $C$ is the rank of the lowest ranked edge in a rank-maximal matching, while Kavitha and Shah \cite{KS06} gave a faster algorithm for dense instances that runs in $O(Cn^\omega)$, where $\omega < 2.376$ is the exponent for matrix multiplication.

\section{Strict preference lists}

This section focuses on instances where the preference lists provided by the applicants are strict but need not be complete. In order to ease the analysis we first modify the given instance: For each applicant $x$ create a last resort job $l(x)$ and place it at the end of $x$'s preference list. This modification does not affect whether the instance has a popular matching or not, but it does force every popular matching to be applicant complete.

Before proceeding we need a few definitions. Let us partition $\A$ into categories $C_1, C_2, \ldots, C_k$, such that the weight of applicants in category $C_i$ is $w_i$ and $w_1 > w_2 > \ldots > w_k > 0$. Given a matching $M$ we say a node $u$ is \emph{matched} in $M$ if there exits $v$ such that $(u,v) \in M$, otherwise $u$ is \emph{free}. We denote the mate of a matched node $u$ by $M(u)$.

The plan is to develop an alternative characterization for popular matchings that will allow us to efficiently test if a given instance admits a popular matching, and if so to produce one. 

\begin{definition} For every applicant $x \in C_1$ let $\first(x)$ be the first job on $x$'s preference list, we say this is an $\first_1$-job. For $x \in C_{i>1}$ define $\first(x)$ as the first non-$\first_{j<i}$-job on $x$'s list, this is an $\first_i$-job.
\end{definition} 

\begin{definition} For every $x \in C_i$ let $\second(x)$ be the first non-$\first_{j \leq i}$-job on $x$'s list.
\end{definition}

Notice that $\second(x)$ is ill defined when $\first(x) = l(x)$. This is not a problem since, as we will see shortly, the job $\second(x)$ is assigned to $x$ only when there is contention for $\first(x)$, which by definition never happens when $\first(x) =l(x)$. The following properties about first and second jobs are easy to check:

\begin{observation} The set of $\first_i$-jobs is disjoint from the set of $\first_j$-jobs for $i \neq j$.
\end{observation}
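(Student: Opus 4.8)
The plan is to unfold the two definitions and observe that the statement follows almost immediately, the only care needed being the recursive nature of the definition of $\first_i$-jobs. First I would fix $i \neq j$ and assume without loss of generality that $i < j$. Since $j > i \geq 1$ we have $j > 1$, so the second clause of the definition governs every applicant $x \in C_j$: by definition $\first(x)$ is the \emph{first} job on $x$'s list that is not an $\first_l$-job for any $l < j$.

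The key observation is that the set of $\first_i$-jobs is exactly $\{\first(x) : x \in C_i\}$, and since $i < j$ each such job is an $\first_l$-job with $l = i < j$. Thus the job $\first(x)$ selected for any $x \in C_j$ is, by construction, chosen to avoid precisely these jobs, so no $\first_j$-job can coincide with an $\first_i$-job. Formally, every $\first_j$-job lies outside $\bigcup_{l < j}\{\first_l\text{-jobs}\}$, and in particular outside the set of $\first_i$-jobs; this gives the desired disjointness. Since $i$ and $j$ were arbitrary with $i < j$, the conclusion holds for all distinct indices.

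The one point I would verify with care is that the recursive definition is well-founded, i.e.\ that when we refer to ``$\first_{l<j}$-jobs'' inside the definition of $\first(x)$ for $x \in C_j$, the sets of $\first_l$-jobs for every $l < j$ have already been fixed. This follows by induction on the category index: the $\first_1$-jobs are defined outright for applicants in $C_1$, and for each $i > 1$ the $\first_i$-jobs depend only on the $\first_l$-jobs with $l < i$, all of which are available by the inductive hypothesis. I expect this bookkeeping to be the only subtle part of the argument; the disjointness itself is immediate once the definitions are unwound.
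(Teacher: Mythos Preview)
Your argument is correct and is precisely the unfolding of definitions one would expect; the paper itself does not give a proof but simply remarks that this property is ``easy to check,'' which is exactly what your unwinding of the recursive definition accomplishes.
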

\begin{observation}  The set of $\first_i$-jobs is disjoint from the set of  $\second_j$-jobs for $i \leq j$, but may not be for $i > j$.
\end{observation}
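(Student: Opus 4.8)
The plan is to treat the two halves of the statement separately: the disjointness for $i \leq j$ will fall out directly from the definition of $\second$, whereas the ``may not be'' clause for $i > j$ requires only a small explicit instance.

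First I would establish the disjointness. Let $p$ be an arbitrary $\second_j$-job, say $p = \second(x)$ for some $x \in C_j$. By the definition of $\second$, the job $\second(x)$ is the first job on $x$'s list that is \emph{not} an $\first_{l \leq j}$-job; in particular $p$ is not an $\first_l$-job for any index $l \leq j$. Hence for every $i \leq j$ the job $p$ is not an $\first_i$-job. Since $p$ was an arbitrary $\second_j$-job, the set of $\second_j$-jobs contains no $\first_i$-job when $i \leq j$, which is precisely the disjointness claim.

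For the second half I would exhibit an instance in which a single job is simultaneously an $\first_i$-job and a $\second_j$-job with $i > j$. The natural witness uses two categories, $C_1 = \{y\}$ and $C_2 = \{x\}$, a job $a$ at the top of $y$'s list, and a job $p$ appearing second on $y$'s list and first on $x$'s list. Then $a = \first(y)$ is the only $\first_1$-job; since $p$ is not an $\first_1$-job, the definition of $\second$ gives $\second(y) = p$, making $p$ a $\second_1$-job, while the definition of $\first$ gives $\first(x) = p$, making $p$ an $\first_2$-job. Thus $p$ lies in both sets with $i = 2 > 1 = j$.

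The only point needing care---and the closest thing to an obstacle---is verifying in the example that $p$ is genuinely not an $\first_l$-job for the small indices $l$ that matter: a job is designated an $\first_i$-job through \emph{some} applicant of category $C_i$, so in principle one must check $p$ against all higher-priority applicants rather than just the one used to realize it as a second job. In the two-category instance this is immediate, since the lone $C_1$ applicant contributes only $a$ as an $\first_1$-job. More conceptually, the asymmetry between the two halves is explained by the fact that being an $\first_i$-job blocks membership in the $\second_j$-sets only for $j \geq i$, leaving the coincidence possible exactly when $j < i$.
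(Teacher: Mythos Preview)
Your argument is correct: the disjointness for $i \leq j$ is immediate from the definition of $\second(x)$, and your two-category instance is a valid witness for the non-disjointness when $i>j$. The paper itself offers no proof beyond declaring the property ``easy to check,'' though Figure~\ref{figure:instance} implicitly serves as its witness (there $D=\first(x_3)=\first(x_4)$ is an $\first_3$-job while $D=\second(x_2)$ is an $\second_2$-job), so your write-up is essentially a spelled-out version of what the paper leaves to the reader.
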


\begin{figure}[t]
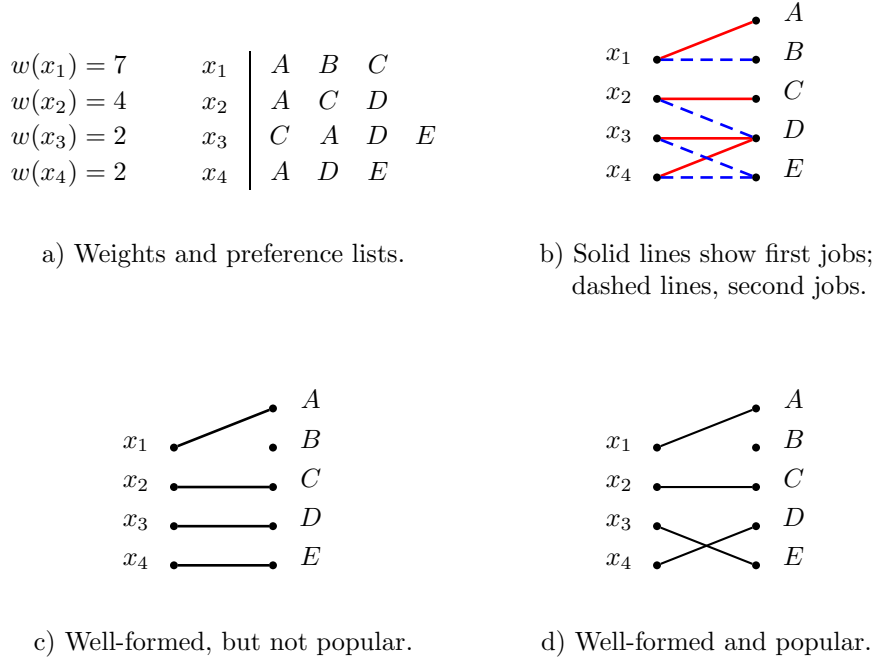

\small
\begin{center}
\begin{tabular}{*{2}{p{6cm}}}

\[\begin{array}{lll|rrrr}
w(x_1) = 7 & \hspace{0.3cm} & x_1 \hspace{0.1cm}&\hspace{0.1cm} A & B & C  \\[0.05cm]
w(x_2) = 4 & & x_2 & A & C & D  \\[0.05cm]
w(x_3) = 2 & & x_3 & C & A & D & E  \\[0.05cm]
w(x_4) = 2 & & x_4 & A & D & E
\end{array}\]

& 
\vspace{-1em}
\[\begin{array}{ll}
 & \cnode*{1.5pt}{a} \rput[B](0.5,0){A} \\[0.1cm]
\cnode*{1.5pt}{x1} \rput[B](-0.5,0){x_1}  \hspace{2.75em} & \cnode*{1.5pt}{b} \rput[B](0.5,0){B} \\[0.1cm]
\cnode*{1.5pt}{x2} \rput[B](-0.5,0){x_2} & \cnode*{1.5pt}{c} \rput[B](0.5,0){C} \\[0.1cm]
\cnode*{1.5pt}{x3} \rput[B](-0.5,0){x_3} & \cnode*{1.5pt}{d} \rput[B](0.5,0){D} \\[0.1cm]
\cnode*{1.5pt}{x4} \rput[B](-0.5,0){x_4} & \cnode*{1.5pt}{e} \rput[B](0.5,0){E}
\end{array}\]
\psset{linewidth=1pt, linecolor=red}
\ncline{-}{x1}{a}
\ncline{-}{x2}{c}
\ncline{-}{x3}{d}
\ncline{-}{x4}{d}
\psset{linewidth=1pt,linecolor=blue,linestyle=dashed}
\ncline{-}{x1}{b}
\ncline{-}{x2}{d}
\ncline{-}{x3}{e}
\ncline{-}{x4}{e}
\\[-1em]
\begin{center}
a) Weights and preference lists.
\end{center} &
\begin{center}
b) Solid lines show first jobs; \\ \hspace{1.5ex} dashed lines, second jobs. 
\end{center}
\\[1em]

\[\begin{array}{ll}
 & \cnode*{1.5pt}{a} \rput[B](0.5,0){A} \\[0.1cm]
\cnode*{1.5pt}{x1} \rput[B](-0.5,0){x_1}  \hspace{2.75em} & \cnode*{1.5pt}{b} \rput[B](0.5,0){B} \\[0.1cm]
\cnode*{1.5pt}{x2} \rput[B](-0.5,0){x_2}   & \cnode*{1.5pt}{c} \rput[B](0.5,0){C} \\[0.1cm]
\cnode*{1.5pt}{x3} \rput[B](-0.5,0){x_3}   & \cnode*{1.5pt}{d} \rput[B](0.5,0){D} \\[0.1cm]
\cnode*{1.5pt}{x4} \rput[B](-0.5,0){x_4}   & \cnode*{1.5pt}{e} \rput[B](0.5,0){E}
\end{array}\]
\psset{linewidth=1pt, linecolor=black}
\ncline{-}{x1}{a}
\ncline{-}{x2}{c}
\ncline{-}{x3}{d}
\ncline{-}{x4}{e}
&
\[\begin{array}{ll}
 & \cnode*{1.5pt}{a} \rput[B](0.5,0){A} \\[0.1cm]
\cnode*{1.5pt}{x1} \rput[B](-0.5,0){x_1}  \hspace{2.75em}  & \cnode*{1.5pt}{b} \rput[B](0.5,0){B} \\[0.1cm]
\cnode*{1.5pt}{x2} \rput[B](-0.5,0){x_2}   & \cnode*{1.5pt}{c} \rput[B](0.5,0){C} \\[0.1cm]
\cnode*{1.5pt}{x3} \rput[B](-0.5,0){x_3}   & \cnode*{1.5pt}{d} \rput[B](0.5,0){D} \\[0.1cm]
\cnode*{1.5pt}{x4} \rput[B](-0.5,0){x_4}   & \cnode*{1.5pt}{e} \rput[B](0.5,0){E}
\end{array}\]
\ncline{-}{x1}{a}
\ncline{-}{x2}{c}
\ncline{-}{x3}{e}
\ncline{-}{x4}{d}
\\[-1em]

\begin{center}
c) Well-formed, but not popular.
\end{center}
&
\begin{center}
d) Well-formed and popular.
\end{center}

\end{tabular}

\caption{An instance showing that not every well-formed matching is popular. \label{figure:instance}}

\end{center}

\end{figure}

Our alternative characterization for popular matching is based on the notion of \emph{well-formed matchings}.
\begin{definition}
A matching is well-formed if it has the following two properties: every $\first_i$-job $p$ is matched to $x \in C_i$ where $\first(x) = p$, and every applicant $x$ is matched either to $\first(x)$ or $\second(x)$. 
\end{definition}

For the unweighted case ($k=1$) our definition of well-formed matching coincides with the characterization developed by Abraham \etal \cite{AIKM05}. For $k=1$ they showed that a matching is popular if and only if is well-formed. Unfortunately, when $k>1$ not every well-formed matching is popular. For example, consider the instance in Figure~\ref{figure:instance}. There are only two well-formed matchings:  $M_1 = \{ (x_1,A), (x_2,C), (x_3,D), (x_4, E)\}$ and  $M_2 = \{ (x_1,A), (x_2,C), (x_3,E), (x_4, D)\}$. The matching $M_1$ is not popular because $\{ (x_2,A), (x_3,C), (x_4,D) \}$ is more popular than $M_1$. On the other hand $M_2$ is popular.

Nevertheless, we can still prove that being well-formed is a necessary condition for begin popular.

\begin{theorem} \label{theorem:characterization}
Let $M$ be a popular matching, then $M$ is well-formed.
\end{theorem}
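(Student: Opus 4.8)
The plan is to prove the contrapositive: if $M$ is not well-formed, then $M$ is not popular, i.e.\ to exhibit a matching $M'$ whose satisfaction \wrt $M$ is positive. Because the categories are ordered by strictly decreasing weight ($w_1 > w_2 > \cdots > w_k$) and the $\first_i$- and $\second_i$-jobs of category $i$ are defined relative to the $\first_j$-jobs of the categories $j<i$, the natural framework is strong induction on the category index $i$. The inductive hypothesis asserts that for every $j<i$ both well-formedness properties already hold for $C_j$: each $\first_j$-job is matched to a $C_j$-applicant owning it, and each $C_j$-applicant sits at its first or second job. The first thing I would record is an immediate consequence of popularity together with this hypothesis: for $x \in C_i$ every job ranked above $\first(x)$ on $x$'s list is an $\first_{j<i}$-job, and by the hypothesis all such jobs are already occupied by their lower-category owners; hence in $M$ no applicant of $C_i$ is matched strictly above its own first job, so $x$ weakly prefers $\first(x)$ to $M(x)$.

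The engine throughout is a promotion argument. To prove property (A) for $C_i$, let $p=\first(y)$ be an $\first_i$-job with $y\in C_i$ and suppose $p$ is not matched to a $C_i$-owner of $p$. If $p$ is free I would simply reset $M' = M - (y,M(y)) + (y,p)$; by the observation above $y$ strictly prefers $p$ to $M(y)$, no one else changes, and $M'$ beats $M$. If $p$ is matched to another applicant $z\in C_i$ with $\first(z)\neq p$, the relevant moves stay inside $C_i$, where every applicant has the same weight $w_i$; here the situation reduces to the unweighted promotion/alternating-path argument of Abraham \etal, since any swap confined to one category is weight-neutral and we only need a single strict improvement. Property (B) for $C_i$ is then handled in the same spirit: once (A) for $C_i$ is in place, an applicant $x\in C_i$ matched to $q\notin\{\first(x),\second(x)\}$ must, by Observations~1 and 2 and the induction, have $q$ strictly below $\second(x)$, so $x$ strictly prefers both $\first(x)$ and $\second(x)$ to $q$ and can be promoted along a weight-neutral within-category alternating path.

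The step I expect to be the crux is the remaining case of (A): $p$ is matched to an applicant $z$ of a \emph{lower} category $C_j$, $j<i$, which is strictly \emph{heavier} than $y$ (since $w_j>w_i$), so a naive swap that demotes $z$ would contribute $w_i-w_j<0$ and fail. The key is that this configuration is highly constrained: by the induction $z$ sits at $\first(z)$ or $\second(z)$, and since $\first(z)$ is an $\first_j$-job it cannot equal the $\first_i$-job $p$ (Observation~1), so necessarily $p=\second(z)$ (a coincidence permitted by Observation~2 only because $j<i$). Rather than demoting $z$, I would \emph{promote} it: move $y$ into $p$ and move $z$ up to $\first(z)$, which makes $z$ strictly happier, and then continue along the alternating path that this displacement opens up inside the already well-formed lower categories, where every applicant bumped off a first job is compensated by moving to its own second job at the same weight. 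The hard part is to argue that this augmenting path exists, terminates (at a free job or a strict improvement), and that the weighted satisfaction telescopes so that all the lower-category moves cancel and only the cross-category promotion of $y$ survives, yielding net satisfaction $w_i>0$; the strict weight ordering and the inductive well-formedness of the lower categories are exactly what rule out an uncompensated heavy demotion along the way.
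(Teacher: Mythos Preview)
Your framework is right and close to the paper's: induction on the category index, case analysis on the occupant of the $\first_i$-job, and a promotion argument. But the crux case is much simpler than you make it, and your version of it has a genuine gap.

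When $p=\first(y)$ with $y\in C_i$ is occupied by $z\in C_j$ with $j<i$, you do not need to know that $p=\second(z)$, nor do you need a long telescoping path. All you need is that $z$ strictly prefers $\first(z)$ to $p$; this follows straight from the definition of $\first(z)$ as the first non-$\first_{<j}$-job on $z$'s list, since $p$ is an $\first_i$-job with $i>j$ and hence itself a non-$\first_{<j}$-job appearing strictly after $\first(z)$. By property~(A) for $C_j$ alone, $\first(z)$ is matched to some $z'\in C_j$. Now make exactly three moves: promote $y$ to $p$, promote $z$ to $\first(z)$, and demote $z'$ to its last-resort job. The satisfaction change is $w_i+w_j-w_j=w_i>0$, and you are done. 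This is what the paper does: Lemma~\ref{lemma:first} is proved by induction using only Lemma~\ref{lemma:first}, and Lemma~\ref{lemma:second} is then derived from it, so the stronger joint hypothesis you carry is never needed.

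Your alternative---sending $z'$ on to $\second(z')$ and continuing a chain---is not merely more work; it can fail. The job $\second(z')$ may be an $\first_{j'}$-job for some $j'>j$ (Observation~2 explicitly permits this overlap), and nothing rules out $j'\geq i$, at which point your inductive hypothesis says nothing about who occupies $\second(z')$ and the chain escapes the controlled region. Even when $j'<i$, the applicant you bump off $\second(z')$ is in a lighter category than $z'$, so the ``same weight'' cancellations you assert do not line up. You flag this as ``the hard part'' without resolving it; in fact it need not be resolved, because one outright demotion of $z'$ is already paid for by the promotion of $z$ in the same category $C_j$. Two smaller points: you omit the easy case $z\in C_{j>i}$ (swap $y$ in, $z$ out, net $w_i-w_j>0$), and your sketch of property~(B) defers to an unspecified ``within-category alternating path'' where again the same three-move trick (promote $x$ to $\second(x)$, promote the occupant to its own first job, demote whoever sits there) suffices.
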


One could be tempted to discard the current definition of well-formed matching and seek a stronger one that will let us replace the \emph{if then} of Theorem~\ref{theorem:characterization} with an \emph{if and only if}. As we will see, in proving the theorem we only use the fact that $w_i > w_{i+1}$. Armed with this sole fact Theorem~\ref{theorem:characterization} is the best we can hope for because if the weights are sufficiently spread apart, one can show that every well-formed matching is in fact popular; see Observation~\ref{obs:all-popular}.

The proof of Theorem~\ref{theorem:characterization} is broken down into Lemmas~\ref{lemma:first} and~\ref{lemma:second}.

\begin{lemma} \label{lemma:first} Let $M$ be a popular matching, then every $\first_i$-job $p$ is matched to an applicant $x \in C_i$ such that $\first(x) = p$ 
\end{lemma}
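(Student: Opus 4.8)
The plan is to argue by contradiction with a \emph{promotion} (alternating path) argument, organized as an induction on the category index $i$. Assume the statement holds for every $\first_j$-job with $j<i$; this means all such jobs are already matched to their owners in lower categories, so no applicant in $C_i$ or below can hold an $\first_{j<i}$-job. Suppose, for contradiction, that some $\first_i$-job $p$ is \emph{not} matched to an owner, i.e.\ to an $x\in C_i$ with $\first(x)=p$. Since $p$ is an $\first_i$-job there is at least one such owner $x$, and I will build from $x$ a modification $M'$ of $M$ that a strictly larger weight of applicants prefers, contradicting popularity.

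First I would establish the crucial local fact that $x$ strictly prefers $p$ over its current match $M(x)$. The only jobs $x$ ranks above $\first(x)=p$ are $\first_{j<i}$-jobs (by definition of $\first(x)$ as the first non-$\first_{j<i}$-job), and by the induction hypothesis $x$ cannot hold any of these; since also $M(x)\neq p$ by assumption, strictness follows. Hence promoting $x$ to $p$ is a strict gain of weight $w_i$. I then grow a path of forced promotions: promote $x\to p$, let $y=M(p)$ be the displaced occupant, and branch on the category of $y$.

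The heart of the proof is this case analysis, all of which uses only $w_i>w_{i+1}$ together with the induction hypothesis. If $p$ is free, or $y$ lies in a lower category ($w(y)<w_i$), I simply evict $y$ and stop, since $x$'s gain dominates. The same disjointness-of-$\first$-sets reasoning that gave strictness for $x$ shows that when $y$ lies in a \emph{higher} category it strictly prefers $\first(y)$ to $p$, so I reroute $y$ to $\first(y)$, which by induction is held by an owner $z$ of the same weight as $y$; the gain of $y$ and the loss of $z$ cancel, leaving $x$'s net $w_i>0$. The remaining case, $y\in C_i$, continues the path inside $C_i$: either $y$ already owns $p$ (impossible for the first occupant, which would contradict our assumption, but possible later, in which case evicting $y$ still leaves a strictly positive surplus from the earlier promotions), or $y$ strictly prefers $\first(y)$ and the path extends to $M(\first(y))$.

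Finally I would argue termination. Every continuation stays within the finite set $C_i$ and targets a fresh $\first_i$-job, so the process either reaches a terminal occupant of one of the types above or closes into a cycle; a cycle is a rotation in which every participant moves to its own $\first_i$-job, so all gain and none lose. In each terminal configuration the total weight preferring $M'$ exceeds that preferring $M$, the desired contradiction. The main obstacle I anticipate is exactly this bookkeeping: ensuring the promotions form a valid simple alternating structure so that the losses telescope and the net vote is provably positive, and in particular handling the same-category chain carefully, distinguishing the path-ends-at-an-owner case from the cycle case.
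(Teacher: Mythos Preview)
Your plan is correct and follows the same overall architecture as the paper's proof: induction on $i$, assume an $\first_i$-job $p$ is held by some non-owner $y$, and branch on the category of $y$ to build a satisfaction-improving modification. The base case and the two ``easy'' sub-cases ($y\in C_{>i}$ evicted; $y\in C_{<i}$ rerouted to $\first(y)$, whose owner by induction has the same weight) are exactly what the paper does.

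Where you diverge is in the $y\in C_i$ sub-case. You propose to grow a possibly long promotion chain inside $C_i$, together with a separate cycle analysis for termination. This is sound, but the paper shows it is unnecessary: one extra step suffices. Look at $z=M(\first(y))$. If $z\in C_{\geq i}$, then promoting $x$ to $p$ and $y$ to $\first(y)$ while demoting $z$ already gives net satisfaction $w_i+w_i-w(z)\geq w_i>0$, and you are done. If $z\in C_{s<i}$, then $\first(z)\neq\first(y)$ (since $\first(y)$ is an $\first_i$-job), so you may simply discard $x$ and let $y$ play the role of $x$: now $\first(y)$ is an $\first_i$-job held by $z\in C_{s<i}$, which is precisely the already-handled ``higher category'' case and resolves immediately via the inductive hypothesis on $\first(z)$. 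Thus the argument terminates after at most three applicants, and the chain/cycle bookkeeping you flagged as the main obstacle can be avoided entirely.
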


\begin{floatingfigure}[r]{3.75cm}
\vspace{1ex}\small
\(\begin{array}{l@{\hspace{3.5em}}l}
 \cnode*{1.5pt}{z} \rput[r](-0.25,0){z}  & \cnode*{1.5pt}{fy} \rput[l](0.25,0){\first(y)} \\[0.35cm]
\cnode*{1.5pt}{y} \rput[r](-0.25,0){y}  & \cnode*{1.5pt}{p} \rput[l](0.25,0){\first(x) = p } \\[0.35cm]
\cnode*{1.5pt}{x} \rput[r](-0.25,0){x}  \\[7em]
\end{array}\)
\ncline{-}{z}{fy}
\ncline[linestyle=dashed,dash=4pt 2pt,arrowinset=0.2]{->}{y}{fy}
\ncline{-}{y}{p}
\ncline[linestyle=dashed,dash=4pt 2pt,arrowinset=0.2]{->}{x}{p}
\end{floatingfigure}

\newcommand\myproof{ \noindent {\it Proof.}~}
\myproof By induction on $i$. For the base case let $x \in C_1$ and $\first(x)=p$. For the sake of contradiction assume that $p$ is matched to $y$ and $\first(y) \neq p$. If $y \in C_{s>1}$, then promote $x$ to $p$ and demote $y$ to $l(y)$. The swap improves the satisfaction by $w_1 - w_s > 0$, but this cannot be since $M$ is popular. If $y \in C_1$ then promote $x$ to $p$ and $y$ to $\first(y)$, and demote applicant $z = M(\first(y))$ as depicted on the right. Thus, the satisfaction improves by $w_1 + w_1 - w(z) > 0$. 

For the inductive case let $x \in C_i$ and $\first(x) = p$. Assume like before that $M(p)=y$ and $\first(y) \neq p$. If $y \in C_{s>i}$, then promote $x$ and demote $y$ to get a change in satisfaction of $w_i - w_s > 0$. If $y \in C_{s<i}$ then by induction $\first(y)$ is matched to $z \in C_s$, promoting $x$ to $p$ and  $y$ to $\first(y)$ while demoting $z$ changes the satisfaction by $w_i + w_s - w_s > 0$. Finally, suppose $y \in C_i$. Let $z = M(\first(y))$. If $z \in C_{\geq  i}$ then the usual promotions change the satisfaction by at least $w_i$. Note that if $z \in C_ {s < i}$ then $\first(z) \neq \first(y)$ by definition of $\first(y)$. Letting $y$ play the role of $x$ in the above argument handles this case.
In every case we have reached the contradiction that $M$ is not popular, therefore the lemma follows.
\qed \\

\begin{lemma} \label{lemma:second} Let $M$ be a popular matching, then every $x \in A$ is matched either to $\first(x)$ or $\second(x)$.
\end{lemma}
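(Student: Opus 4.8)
The plan is to argue by contradiction: suppose some $x \in C_i$ is matched to a job $q$ that is neither $\first(x)$ nor $\second(x)$ (since $M$ is applicant-complete, $x$ is matched to something). I would first dispose of the possibility that $q$ is ranked \emph{above} $\second(x)$ on $x$'s list. By the definition of $\second(x)$, every job strictly preceding $\second(x)$ is an $\first_{j\le i}$-job; so if such a $q\neq\first(x)$ preceded $\second(x)$, then $q$ would be an $\first_j$-job for some $j\le i$, and Lemma~\ref{lemma:first} would force $q$ to be matched to an applicant of $C_j$ whose first job is $q$. Since $x$ is matched to $q$, this gives $x\in C_j$ and $\first(x)=q$, contradicting $q\neq\first(x)$. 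Because the lists are strict, the only remaining possibility is that $x$ is matched strictly \emph{below} $\second(x)$, and this is the case I must rule out.

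Now set $p=\second(x)$ and $z=M(p)$. If $p$ is free or $z$ lies in a strictly lighter category $C_{s>i}$, then promoting $x$ to $p$ (and, if needed, demoting $z$ to $l(z)$) changes the satisfaction by $w_i$ or by $w_i-w_s$, both positive, contradicting popularity. The interesting case is $z\in C_s$ with $s\le i$, where this single swap loses at least as much as it gains. Here I would use two facts. First, since $p=\second(x)$ is by definition not an $\first_{\le i}$-job, it is not an $\first_{\le s}$-job either (as $s\le i$); as $\first(z)$ is ranked above every non-$\first_{\le s}$-job on $z$'s list, including $p$, applicant $z$ strictly prefers $\first(z)$ to $p$ (in particular $p\neq\first(z)$). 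Second, by Lemma~\ref{lemma:first} the $\first_s$-job $\first(z)$ is occupied by some $y\in C_s$ with $y\neq z$. I then perform the three-way shift $x\to p$, $z\to\first(z)$, $y\to l(y)$: applicant $x$ improves by $w_i$, $z$ improves by $w_s$, and $y$ worsens by $w_s$, for a net change of $w_i>0$, again contradicting popularity.

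The crux—the step I expect to be the main obstacle—is exactly this last case $z\in C_{\le i}$: a direct promotion of $x$ does not pay for itself, because displacing the no-lighter occupant $z$ cancels $x$'s gain. The resolution is the cancellation trick of routing $z$ up to its own first job and pushing out $\first(z)$'s current holder $y$, both in $C_s$, so that the $\pm w_s$ terms annihilate and leave the clean surplus $w_i$. I should verify that $x,z,y$ are distinct applicants and $q,p,\first(z)$ distinct jobs, so the shift is a legitimate matching; each follows from the structural facts already noted (for instance $q\neq\first(z)$, since otherwise Lemma~\ref{lemma:first} would force $q=\first(x)$, which is ranked above $\second(x)$). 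Note that the only property of the weights used is the strict order $w_1>\cdots>w_k$, entering solely through the dichotomy ``$z$ lighter than $x$'' versus ``$z$ no lighter,'' which is consistent with the remark that Theorem~\ref{theorem:characterization} needs nothing more.
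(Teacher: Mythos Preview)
Your proposal is correct and follows essentially the same route as the paper's proof: first use Lemma~\ref{lemma:first} to rule out matches above $\second(x)$, then in the contradiction case perform the three-way promotion $x\to\second(x)$, the occupant of $\second(x)$ to its own first job, and demote that job's holder, yielding a net gain of $w_i$. The only difference is cosmetic---your $z$ and $y$ are the paper's $y$ and $z$, and you spell out the distinctness checks that the paper leaves implicit.
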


\begin{floatingfigure}[r]{3.75cm}
\vspace{1ex}\(\small \begin{array}{l@{\hspace{3.5em}}l}
 \cnode*{1.5pt}{z} \rput[r](-0.25,0){z}  & \cnode*{1.5pt}{fy} \rput[l](0.25,0){\first(y)} \\[0.35cm]
\cnode*{1.5pt}{y} \rput[r](-0.25,0){y}  & \cnode*{1.5pt}{p} \rput[l](0.25,0){\second(x) = p } \\[0.35cm]
\cnode*{1.5pt}{x} \rput[r](-0.25,0){x} \\[7em]
\end{array}\)
\ncline{-}{z}{fy}
\ncline[linestyle=dashed,dash=4pt 2pt,arrowinset=0.2]{->}{y}{fy}
\ncline{-}{y}{p}
\ncline[linestyle=dashed,dash=4pt 2pt,arrowinset=0.2]{->}{x}{p}
\end{floatingfigure}

\myproof
As a corollary of Lemma~\ref{lemma:first} no $x \in A$ can be matched to a job which is strictly better than $\first(x)$ or in between $\first(x)$ and $\second(x)$. Hence we just need to show that $x$ cannot be matched to a job which is strictly worse than $\second(x)$. For the sake of contradiction let us assume this is the case.

Let $x \in C_i$ and $p=\second(x)$. Note that $p$ must be matched to some applicant $y$, otherwise we get an immediate improvement by promoting $x$ to $p$. If $y \in C_{s>i}$ then promoting $x$ and demoting $y$ gives us a more popular matching because $w_i - w_s > 0$. Otherwise $y$ belongs to $C_{s\leq i}$ in which case $\first(y) \neq p$. By Lemma~\ref{lemma:first} there exists $z \in C_s$ matched to $\first(y)$. Promoting $x$ to $\second(x)$ and $y$ to $\first(y)$ while demoting $z$ improves the satisfaction by $w_i + w_s - w_s > 0$. A contradiction.
\qed \\

Let $G$ be a subgraph of $H$ having only those edges between applicants and their first and second jobs. See the graph in Figure~\ref{figure:instance}.b. Theorem~\ref{theorem:characterization} tells us that every popular matching must be contained in $G$. Ideally we would like every well-formed matching in $G$ to be popular; unfortunately, this is not always the case. To remedy this situation, we will prune some edges from $G$ that cannot be part of any popular matching. Then we will argue that every well-formed matching in the pruned graph is popular. In order to understand the intuition behind the pruning algorithm we need the notion of $\emph{promotion path}$.

\begin{definition} A promotion path \wrt a well-formed matching $M$ is a sequence $p_0, x_0, \ldots, p_s, x_s$, such that $p_i = \first( x_i)$,  $(x_i,p_i) \in M$, and for all $i< s$, applicant $x_i$ prefers $p_{i+1}$ over $p_i$.
\end{definition}

Such a path can be used to free $p_0$ by promoting $x_i$ to $p_{i+1}$, for all $i < s$, and demoting $x_s$. We say the cost (in terms of satisfaction) of the path is $w(x_s) - w(x_0) - \ldots - w(x_{s-1})$, as everyone gets a better job except $x_s$. To illustrate this consider the instance in Figure~\ref{figure:instance}, and the well-formed matching $\{(x_1,A), (x_2,C), (x_3,D), (x_4,E) \}$. The sequence $D, x_3, C, x_2, A, x_1$ is a promotion path with cost $w(x_1) - w(x_2) - w(x_3)=1$ that can be used to free $D$.

To see how promotion paths come into play, let $M$ be a well-formed matching and $M'$ be any other matching. Suppose $y$ prefers $M'$ over $M$, we will construct a promotion path starting at $p_0 = M'(y)$. Note that $p_0$ is an $\first$-job and must be matched in $M$ to $x_0$ such that $\first(x_0) = p_0$. Thus, our path starts with $p_0, x_0$. To extend the path from $x_i$, check if  $x_i$ prefers $M'$ over $M$, if that is the case, $p_{i+1} = M'(x_i)$ and $x_{i+1} = M(p_{i+1})$, otherwise the path ends at $x_i$. Notice that if $x_i \in C_s$ then $x_{i+1} \in C_{<s}$, which means the path must be simple. Coming back to $y$, the applicant who induced the path, note that if $w(y)$ is greater than the cost of the path, then $M$ cannot be popular because using the promotion path and promoting $y$ to $p_0$ gives us a more popular matching. On the other hand, it is easy to see that if for every applicant $y$, the cost of the path induced by $y$ is at least $w(y)$, then $M'$ cannot be more popular than $M$. For example, if the weights are sufficiently spread apart then any promotion path out of an $f_i$-job will have cost at least $w_i$, and as a result, any well-formed matching would be popular.

\begin{observation} \label{obs:all-popular} If $w_i \geq 2 w_{i+1}$ for all $i < k$ then every well-formed matching is popular.
\end{observation}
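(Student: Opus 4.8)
The plan is to show that no matching $M'$ can be more popular than a given well-formed matching $M$, by proving that every applicant who prefers $M'$ over $M$ induces a promotion path whose cost is at least that applicant's weight; the remark preceding the statement then immediately yields that $M'$ is not more popular than $M$. So I would fix a well-formed $M$ and an arbitrary $M' \neq M$, take any $y$ preferring $M'$ over $M$, and consider the promotion path $p_0, x_0, \ldots, p_s, x_s$ it induces, where $p_0 = M'(y)$ and $x_i = M(p_i)$.

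First I would record the structural facts already developed for such paths. The applicants $x_0, x_1, \ldots, x_s$ lie in strictly decreasing categories, say $x_i \in C_{c_i}$ with $c_0 > c_1 > \cdots > c_s$, so their weights strictly increase and in particular $c_i - c_s \geq s - i$. Next I would pin down the weight of $y$ itself: since $p_0 = M'(y)$ is an $\first_{c_0}$-job that $y$ prefers to its $M$-partner (which is $\first(y)$ or $\second(y)$ by well-formedness), the disjointness of $\first_i$- and $\first_j$-jobs forces $y$ to lie in a category of index at least $c_0$, whence $w(y) \leq w_{c_0} = w(x_0)$. Thus it suffices to prove that the cost $w(x_s) - \sum_{i=0}^{s-1} w(x_i)$ of the path is at least $w(x_0)$.

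The heart of the argument is a geometric estimate using the hypothesis $w_i \geq 2 w_{i+1}$. Iterating this bound along the decreasing categories gives $w(x_i) = w_{c_i} \leq 2^{-(c_i - c_s)} w_{c_s} \leq 2^{-(s-i)} w(x_s)$ for each $i$. Summing the quantity I need to dominate,
\[
2\, w(x_0) + \sum_{i=1}^{s-1} w(x_i) \;\leq\; 2^{-(s-1)} w(x_s) + \sum_{i=1}^{s-1} 2^{-(s-i)} w(x_s) \;=\; w(x_s),
\]
where the final equality uses the telescoping $\sum_{j=1}^{s-1} 2^{-j} = 1 - 2^{-(s-1)}$. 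Rearranging gives $w(x_s) - \sum_{i=0}^{s-1} w(x_i) \geq w(x_0) \geq w(y)$, which is exactly the cost bound sought; the degenerate case $s = 0$, where the path is $p_0, x_0$ and the cost is $w(x_0)$, is covered by the empty sum.

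I expect the main obstacle to be the weight estimate rather than the bookkeeping: the factor $2$ in the hypothesis is used tightly, since the geometric tail must dominate not merely $w(x_0)$ but $2\,w(x_0)$ in order to also absorb the weight of the inducing applicant $y$. A secondary point requiring care is the claim $w(y) \leq w(x_0)$, whose verification splits into cases according to whether $y$ is matched to $\first(y)$ or $\second(y)$ in $M$ and appeals again to the disjointness of first-job classes. Once both are in hand, applying the estimate to every applicant who prefers $M'$ and invoking the preceding remark shows that $M'$ is not more popular than $M$; as $M'$ was arbitrary, $M$ is popular.
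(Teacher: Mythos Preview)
Your proposal is correct and follows exactly the line the paper indicates. The paper does not give a formal proof of this observation; it merely remarks in the sentence preceding it that when the weights are spread apart every promotion path out of an $\first_i$-job has cost at least $w_i$, and hence any well-formed matching is popular. Your argument supplies precisely the missing verification of that remark: the decreasing-category structure of the path together with the geometric bound $w(x_i)\le 2^{-(s-i)}w(x_s)$ yields cost $\ge w(x_0)=w_{c_0}$, and the observation $w(y)\le w_{c_0}$ then feeds into the criterion stated just before the observation. Nothing in your write-up departs from the paper's intended route.
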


The pruning procedure keeps a label $\lab(p)$ for every $f_i$-job $p$. Based on these labels we will decide which edges to prune. The following invariant states the meaning these labels carry.

\begin{invariant} \label{invariant:label} Let $p$ be an $\first_i$-job and $M$ be any well-formed matching contained in the pruned graph. A minimum cost promotion path out of $p$ \wrt $M$ has cost exactly~$\lab(p)$.
\end{invariant}

We now describe the procedure {\sc pruned-strict} whose pseudo-code is given in Figure~\ref{figure:pruning-strict}. The algorithm works in iterations. The $i$th iteration consists of two steps. First, we prune some edges incident to $C_i$, making sure that these edges do not belong to any popular matching. Second, we label all the $f_i$-jobs so that Invariant~\ref{invariant:label} holds for them. Note that later pruning cannot break the invariant for $f_i$-jobs as promotion paths out of these jobs only use edges incident to applicants in $C_{\leq i}$.

In the first iteration we do not prune any edges. Notice that a promotion path out of an $f_1$-job must end in its $C_1$ mate, therefore line 1 sets the label of all $\first_1$-jobs to $w_1$.

At the beginning of the $i$th iteration we know that the invariant holds for all \mbox{$\first_{< i}$-jobs}. Consider an applicant $x \in C_i$. Let $q$ be a job $x$ prefers over $\first(x)$. Note that $q$ must be an $\first_{<i}$-job, therefore, in any well-formed matching included in the pruned graph the minimum cost promotion out $q$ has cost $\lab(q)$. We can use the path to free $q$ and then promote $x$ to it, the total change in satisfaction is $w_i - \lab(q)$. Therefore, if $\lab(q) < w_i$ then no popular matching exists. Lines 3--5 check for this. The expression $\minlabel(x, r)$ is a shorthand notation for $\min_q \lambda(q)$ where $q$ is a job $x$ prefers over $r$; if there is no such job, we define $\min_q \lambda(q) = \infty$.

\begin{figure}[t]
\center
\psframebox[boxsep=true,framearc=0.05,framesep=8pt]{
\begin{minipage}{2cm}
\begin{algorithm}{prune-strict}{G}
\mbox{All $f_1$-jobs get a label of $w_1$.} \\
\begin{FOR}{i = 2 \TO k}
   \begin{FOR}{x \in C_i}
      \begin{IF}{ \minlabel(x,\first(x)) < w_i }
          \RETURN \mbox{``no popular matching exists''}
      \end{IF}
   \end{FOR} \\
   \begin{FOR}{p \in \mbox{$f_i$-job}}
      S \= \{ x \in \A\, |\, \first(x) = p \} \\
      \begin{IF}{S = \{x\} }
          \lab(p) \= \min (w_i, \minlabel(x, \first(x) ) -w_i)
      \ELSE
          \lab(p) \= w_i \\
          \begin{FOR}{x \in S \mbox{ such that } \minlabel(x,p) < 2 w_i}
                \mbox{prune the edge } (x,p)
          \end{FOR}
      \end{IF}
   \end{FOR}
\end{FOR} \\
\begin{FOR}{x \in \A \mbox{ such that } \minlabel(x,\second(x)) < w(x)}
   \mbox{prune the edge } (x, \second(x))
\end{FOR} \end{algorithm}
\end{minipage} }
\caption{Pruning the graph with strict preference lists. \label{figure:pruning-strict}}
\end{figure}

Let $p$ be an $\first_i$-job and $S$ be the set of applicants in $C_i$ whose first job is $p$, also let $M$ be a well-formed matching contained in the pruned graph. Suppose $S$ consists of just one applicant $x$, then $(x,p)$ must belong to $M$. A promotion path out of $p$ either ends at $x$ or continues with another job which $x$ prefers over $p$. Therefore $\lab(p) =\min (w_i, \minlabel(x, \first(x) )-w_i) $, which must be non-negative. On the other hand, if $|S|>1$ then only one of these applicants will be matched to $p$ while the rest must get their second job. Suppose $M(p) = x \in S$. Invariant~\ref{invariant:label} tells us that there exists a promotion path \wrt $M$ out of $p$ with cost $\minlabel(x,p) - w_i$ that can be used to free $p$, which in turn allows us promote one of the other applicants in $S - x$ to $p$. Therefore if $\minlabel(x,p) < 2w_i$, $M$ is not popular, which means the edge $(x, p)$ cannot belong to any popular matching and can safely be pruned. We set $\lab(p) = w_i$ because in the pruned graph $p$ can only be matched to $x \in S$ such that $\minlabel(x,p) \geq 2w_i$. Lines 6--12 capture exactly this.

Finally, Lines 13--14 prune edges $(x,\second(x))$ that cannot be part of any popular matching because of promotion paths out of jobs between $\first(x)$ and $\second(x)$ on $x$'s list, with cost $\minlabel(x,\second(x)) < w_i$.

To exemplify how {\sc prune-strict} works let us run the algorithm on the instance in Figure~\ref{figure:instance}. The jobs are labeled $\lab(A) = 7$, $\lab(C) = 3$, and $\lab(D) =2$. The only edge pruned is $(x_3, D)$ because both $x_3$ and $x_4$ have $D$ as their first job and $\minlabel(x_3, D) = 3 < 4 = 2 w(x_3)$. Hence, the only well-formed matching included in the pruned graph is $M_2 = \{ (x_1,A), (x_2,B), (x_3,E),(x_4,D)\}$. The next theorem states that $M_2$ must be popular.

\begin{theorem} Let $G'$ be the resulting pruned graph after running {\sc prune-strict}. Then a matching is popular if and only if it is a well-formed matching in $G'$.
\end{theorem}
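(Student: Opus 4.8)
The plan is to prove both directions of the equivalence, using the machinery of promotion paths and Invariant~\ref{invariant:label} as the central tool.

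\medskip

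\noindent\textbf{Plan.}
The theorem has two directions. By Theorem~\ref{theorem:characterization} every popular matching is well-formed, and since {\sc prune-strict} only removes edges that cannot belong to any popular matching (as argued in the running text preceding the theorem), every popular matching is a well-formed matching in $G'$; this gives the ``only if'' direction essentially for free. The real content is the ``if'' direction: I must show that \emph{every} well-formed matching $M$ contained in the pruned graph $G'$ is popular.

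\medskip

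\noindent\textbf{Key steps for the ``if'' direction.}
First I would fix a well-formed matching $M$ in $G'$ and an arbitrary competing matching $M'$, and show that the satisfaction of $M'$ \wrt $M$ is at most $0$. Following the construction already described in the text, for each applicant $y$ who prefers $M'$ over $M$ I would build the promotion path induced by $y$, starting at $p_0 = M'(y)$ (which is necessarily an $\first$-job matched in $M$ to some $x_0$ with $\first(x_0)=p_0$). The crucial claim is that for every such $y$ the cost of the induced path is at least $w(y)$; granting this claim, charging each ``winner'' $y$ against the endpoint it displaces on its path shows the total weight of applicants preferring $M'$ cannot exceed the total weight of those preferring $M$, so $M'$ is not more popular than $M$.

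\medskip

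\noindent\textbf{The core claim.}
The heart of the argument is to show that in the pruned graph every promotion path out of an $\first_i$-job $p$ has cost at least $w_i$, i.e. $\lab(p) \geq w_i$, and that when $y$ gets promoted to $p_0$ the relevant bound is $w(y) \leq w_i = \lab(p_0)$ appropriately combined with the path cost. Here I would invoke Invariant~\ref{invariant:label}: by construction the minimum-cost promotion path out of an $\first_i$-job $p$ in $G'$ has cost exactly $\lab(p)$. The pruning in Lines 3--5 guarantees no applicant in $C_i$ covets an $\first_{<i}$-job whose label is below $w_i$, the pruning in Lines 6--12 ensures that when $|S|>1$ the surviving edges $(x,p)$ all satisfy $\minlabel(x,p)\geq 2w_i$ so that a second promotion into $p$ costs at least $w_i$, and the pruning in Lines 13--14 removes exactly those second-job edges whose existence would create a cheap path. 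I would verify that these three guarantees together imply $\lab(p)\geq w_i$ for every surviving $\first_i$-job, and that the induced path out of $p_0=M'(y)$ therefore costs at least $w(y)$.

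\medskip

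\noindent\textbf{Main obstacle.}
The delicate point, and where I expect the proof to require the most care, is establishing that Invariant~\ref{invariant:label} genuinely holds after all pruning is complete, so that the path costs computed locally during iteration $i$ remain valid costs in the fully pruned graph $G'$. The text asserts that later pruning cannot break the invariant for $\first_i$-jobs because promotion paths out of them only touch applicants in $C_{\leq i}$; I would need to make this monotonicity argument precise by induction on $i$, checking that the labels $\lab(p)$ set in iteration $i$ correctly account for both the ``path ends here'' option (contributing $w_i$) and the ``path continues to a better job'' option (contributing $\minlabel(x,\first(x))-w_i$ via a minimum-cost continuation), and that these are simultaneously achievable by a single well-formed matching in $G'$. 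Tying the per-applicant charging scheme to these labels so that winners and losers balance exactly is the step I would spend the most effort verifying.
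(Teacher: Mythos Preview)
Your overall strategy matches the paper's: the ``only if'' direction follows from Theorem~\ref{theorem:characterization} plus the fact that {\sc prune-strict} removes only edges that cannot lie in any popular matching; the ``if'' direction proceeds by fixing a well-formed $M\subseteq G'$, an arbitrary $M'$, and arguing that for every applicant $y$ who prefers $M'$ the induced promotion path has cost at least $w(y)$.

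However, your stated \emph{core claim} is wrong. You assert that ``in the pruned graph every promotion path out of an $\first_i$-job $p$ has cost at least $w_i$, i.e.\ $\lab(p)\geq w_i$.'' This is false. When $p$ is an $\first_i$-job with a \emph{unique} applicant $x$ having $\first(x)=p$, Line~9 sets $\lab(p)=\min\bigl(w_i,\ \minlabel(x,\first(x))-w_i\bigr)$, and nothing prevents $\minlabel(x,\first(x))-w_i$ from being strictly smaller than $w_i$. For instance, with $w_1=10$, $w_2=6$ and $x\in C_2$ whose only job above $\first(x)$ is an $\first_1$-job with label $10$, we get $\lab(\first(x))=\min(6,10-6)=4<w_2$. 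So you cannot bound the label of the target job $p_0$ by the category weight of $p_0$.

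The paper sidesteps this by bounding from the \emph{applicant} side rather than the job side. For $y\in C_i$ preferring $M'$ over $M$, the job $p_0=M'(y)$ is one that $y$ strictly prefers over $M(y)$, hence $\lab(p_0)\geq\minlabel\bigl(y,M(y)\bigr)$. If $M(y)=\first(y)$, Lines~4--5 guarantee $\minlabel(y,\first(y))\geq w_i$; if $M(y)=\second(y)$, the fact that the edge $(y,\second(y))$ survived Lines~13--14 guarantees $\minlabel(y,\second(y))\geq w_i$. Either way the induced path has cost at least $w(y)$, which is exactly what is needed. Note in particular that in the counterexample above, any $z\in C_2$ for whom $p$ sits strictly between $\first(z)$ and $\second(z)$ would have $\minlabel(z,\second(z))\leq 4<6$, so the edge $(z,\second(z))$ is pruned and $z$ can never be in a position to exploit the cheap path out of $p$. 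Your proposal should replace the per-job bound $\lab(p)\geq w_i$ with this per-applicant bound $\minlabel(y,M(y))\geq w(y)$; the rest of your outline then goes through as in the paper.
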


We have argued that no pruned edge can be present in any popular matching, let us now show that every well-formed matching $M$ in the pruned graph is indeed popular. Let $M'$ be any other matching, our goal is to show that $M'$ is not more popular than $M$. Suppose $x$ prefers $M'$ over $M$, this induces a promotion path at $M'(x)$ with respect to $M$. If $x$ gets $f(x)$ in $M$ then the cost of such a path is at least $\minlabel(x, \first(x)) \geq w_i$ by Lines 4-5. Otherwise, $M(x)=\second(x)$ and Lines 13-14 make sure the cost at the promotion path is at least $w_i$. Since this holds for every applicant $x$, $M'$ cannot be more popular than $M$. 

It is entirely possible that the pruned graph does not contain any well-formed matching. In this case we know that no popular matching exists.

\subsection{Implementation}

Let $G$ be the graph with edge set $\{ (x, \first(x))$, $(x, \second(x)) \, |\, x \in \A \}$. Assuming the applicants are already partitioned into categories $C_i$, we can compute $G$ in $O(n+m)$ time. The pruning procedure also takes linear time, since the $i$th iteration takes $O\left(\sum_{x \in C_i} \deg_H(x)\right)$ time. Let $G'$ be the pruned graph. Finding a popular matching reduces to finding a well-formed matching in $G'$.

Abraham \etal \cite{AIKM05} showed how to build a well-formed matching for unweighted instances ($k=1$), if one exists, in linear time. The unweighted setting is slightly simpler than ours. More specifically, the set of second jobs is disjoint from the set first jobs and every applicant in $G'$ has degree exactly 2. These two issues can be easily handled: First, for every edge $(x,\second(x))$ in $G$ if $s(x)$ happens to be someone else's first job then prune the edge $(x,s(x))$. Second, iteratively find an applicant $x$ with degree 1, let $p$ be $x$'s unique neighbor, add $(x,p)$ to the matching, and remove $x$ and $p$ from $G'$. All these modifications can be carried out in $O(n+m)$ time. If at the end some applicant has degree 0 there is no well-formed matching, and consequently no popular matching. Otherwise every applicant has degree 2 and the set of $f$-jobs is disjoint from the set of $s$-jobs, thus we can apply directly the linear time algorithm of Abraham \etal \cite{AIKM05}.

\begin{theorem} In the case of strict preferences lists, we can find a weighted popular matching, or determine that none exists, in $O(n + m)$ time. \label{theorem:strict}
\end{theorem}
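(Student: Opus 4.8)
The plan is to establish the running time of the entire pipeline by accounting for its three phases---constructing the graph $G$, pruning it to obtain $G'$, and extracting a well-formed matching---and arguing each takes $O(n+m)$ time. The correctness of the output (that we find a popular matching exactly when one exists) is already guaranteed by the preceding theorem characterizing popular matchings as well-formed matchings in $G'$; so the present statement is purely a complexity claim, and the proof is a matter of carefully bounding the work in each phase.

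First I would argue that $G$ can be built in $O(n+m)$ time. Assuming the applicants come pre-partitioned into the categories $C_1, \ldots, C_k$ (which can itself be done by sorting or bucketing the distinct weights in $O(n+m)$ time), I would compute the $\first$- and $\second$-jobs by a single pass over the categories in increasing weight order: when processing $C_i$, scan each applicant $x$'s preference list and mark the first job not already claimed as an $\first_{j<i}$-job as $\first(x)$, and the first job not claimed as an $\first_{j\le i}$-job as $\second(x)$. Maintaining a boolean array recording which jobs are currently $\first$-jobs lets each lookup run in constant time, so the total work is proportional to the sum of the preference-list lengths, i.e.\ $O(n+m)$.

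Next I would bound the pruning procedure {\sc prune-strict}. The key observation, already noted in the text, is that the $i$th iteration touches only edges incident to applicants in $C_i$, and the computation of each $\minlabel(x, \cdot)$ quantity requires scanning $x$'s preference list once. Hence the $i$th iteration costs $O\!\left(\sum_{x \in C_i} \deg_H(x)\right)$, and summing over all $i$ telescopes to $O(\sum_{x} \deg_H(x)) = O(m)$, plus $O(n)$ bookkeeping for initializing labels and iterating over applicants. Lines 13--14, which prune second-job edges, likewise scan each applicant's list once, contributing another $O(n+m)$.

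Finally I would handle the extraction of a well-formed matching in $G'$. Here the main obstacle---and the only genuinely subtle point---is that the weighted setting does not immediately reduce to the unweighted algorithm of Abraham \etal, because second jobs may coincide with other applicants' first jobs and applicants may have degree other than $2$. I would follow the two reductions described in the Implementation subsection: prune any edge $(x, \second(x))$ where $\second(x)$ is someone's first job, then repeatedly contract degree-$1$ applicants (matching each to its unique neighbour and deleting both), detecting infeasibility if any applicant reaches degree $0$. Each of these is a standard linear-time graph manipulation. After this cleanup every remaining applicant has degree exactly $2$ with its $\first$- and $\second$-jobs disjoint, so the linear-time unweighted routine applies directly. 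Combining the three $O(n+m)$ phases yields the claimed bound.
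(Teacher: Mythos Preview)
Your proposal is correct and mirrors the paper's own argument in the Implementation subsection: linear-time construction of $G$, the $\sum_i \sum_{x\in C_i}\deg_H(x)=O(m)$ bound for {\sc prune-strict}, and the two cleanup steps (prune $(x,\second(x))$ when $\second(x)$ is someone else's $\first$-job, then iteratively contract degree-$1$ applicants) before invoking the unweighted linear-time routine of Abraham \etal One wording slip: since $w_1>\cdots>w_k$ and you need the $\first_{j<i}$-jobs computed before handling $C_i$, the pass over categories must run in \emph{decreasing} weight (increasing index) order, not ``increasing weight order'' as you wrote---your own phrase ``not already claimed as an $\first_{j<i}$-job'' shows you have the right logic, just the wrong adjective.
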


Recall that at the beginning we modified the instance by adding dummy jobs at the end of every applicant's list. A natural objective would be to find a popular matching that minimizes the number of applicants getting a dummy job. The cited work also shows how to do this in $O(n+m)$ time; thus, it carries over to our problem.

\section{Preference lists with ties}

Needless to say, if ties are allowed in the preference lists, the solution from the previous section does not work anymore. We will work out an alternative definition for first and second jobs which will lead to a new definition of well-formed matchings. Like in the case without ties if a matching is popular then it must be well-formed, but the converse does not always hold. We will show how to prune some edges that cannot be part of any popular matching to arrive at the goal that every well-formed matching in this pruned graph is popular.

Let us start by revising the notion of first job. For $x \in C_1$, let $\first(x)$ be the set of jobs on $x$'s list with the highest rank. Let $G_1$ be the graph with edges between applicants in $C_1$ and their first jobs. We say a job/applicant is \emph{critical} in $G_i$ if it is matched in every maximum matching of $G_i$, otherwise we say it is \emph{non-critical}. For $x \in C_i$, define $\first(x)$ as the highest ranked jobs on $x$'s list which are non-critical in all $G_{< i}$. The graph $G_i$ includes $G_{i-1}$ and edges between applicants in $C_i$ and their first jobs. We note that a critical node in $G_i$ may be non-critical in some $G_{>i}$.

If $x \in C_i$ is non-critical in $G_i$ we define $\second(x)$ as the highest ranked set of jobs on $x$'s list which are non-critical in all $G_{\leq i}$. If $x$ is critical in $G_i$ then $\second(x)$ is the empty set.

\begin{observation} For every applicant $x \in A$ we have $\first(x) \cap \second(x) = \emptyset$.
\end{observation}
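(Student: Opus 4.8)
The plan is to reduce the statement to a single fact about maximum matchings in a bipartite graph, applied to $G_i$ (where $x \in C_i$). First I would dispose of the easy case: if $x$ is critical in $G_i$ then $\second(x) = \emptyset$ by definition, so the intersection is trivially empty. So assume $x$ is non-critical in $G_i$. Then $\first(x)$ is the set of highest-ranked jobs on $x$'s list that are non-critical in every $G_{<i}$, while $\second(x)$ is the set of highest-ranked jobs non-critical in every $G_{\le i}$. The key point is the asymmetry in which graph measures criticality: every job of $\second(x)$ is, in particular, non-critical in $G_i$, whereas I will argue that every job of $\first(x)$ is \emph{critical} in $G_i$. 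These two facts are incompatible, so the sets are disjoint.

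The heart of the argument is therefore the claim that if an applicant $x$ is non-critical in $G_i$, then each $p \in \first(x)$ is critical in $G_i$. Since $x \in C_i$, each edge $(x,p)$ with $p \in \first(x)$ lies in $G_i$, so this is an instance of the following general lemma about a bipartite graph with parts $\A$ and $\J$: \emph{no edge joins a non-critical applicant to a non-critical job}, where ``non-critical'' means ``missed by some maximum matching.'' Granting the lemma and taking $u=x$, $v=p$, the fact that $x$ is non-critical and $(x,p)$ is an edge forces $p$ to be critical in $G_i$, as required; combined with the observation of the previous paragraph this yields $\first(x) \cap \second(x) = \emptyset$.

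To prove the lemma I would suppose, for contradiction, that an edge $(u,v)$ has $u \in \A$ missed by a maximum matching $M_u$ and $v \in \J$ missed by a maximum matching $M_v$. Since $u$ is exposed in $M_u$ and $(u,v)$ is an edge, $v$ must be matched in $M_u$, otherwise $M_u + (u,v)$ is larger. I then examine the symmetric difference $M_u \triangle M_v$, a union of alternating paths and even cycles. The vertex $v$ is matched by $M_u$ and exposed by $M_v$, so it is the endpoint of a path component $P$ whose edge at $v$ belongs to $M_u$. Because $M_u$ and $M_v$ have equal size, $P$ contains equally many edges of each matching, hence has even length; by bipartiteness its other endpoint $z$ lies on the same side as $v$, so $z \ne u$ and $z$ is exposed by $M_u$. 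Flipping $M_u$ along $P$ yields a maximum matching $M' = M_u \triangle P$ that exposes $v$ while still leaving $u$ exposed, since $u$ cannot be internal to $P$ (internal vertices are matched by $M_u$) and is not the endpoint $z$. Then $M' + (u,v)$ is a larger matching, the desired contradiction.

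I expect the bipartite matching lemma to be the only real obstacle; once it is in hand, the observation follows immediately from the definitions. The delicate point is the parity argument, which is what forces the other endpoint $z$ onto $v$'s side of the bipartition, rules out $z = u$, and guarantees $u \notin P$. The remaining steps---the critical-case reduction and the edge-bookkeeping that places each $(x,p)$ inside $G_i$---are routine.
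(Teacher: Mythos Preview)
Your proposal is correct and follows the same line as the paper's argument: the paper simply asserts that when $x\in C_i$ is non-critical in $G_i$ every job in $\first(x)$ is critical in $G_i$, which immediately gives disjointness from $\second(x)$. You supply exactly the missing justification---the standard bipartite-matching fact that no edge joins a non-critical vertex on one side to a non-critical vertex on the other---and your symmetric-difference/parity proof of that fact is sound.
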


Essentially, when $x \in C_i$ is non-critical in $G_i$ we can show that all the jobs in $\first(x)$ are critical, therefore $\first(x)$ and $\second(x)$ are always disjoint.

\begin{definition}
A matching $M$ is \emph{well-formed} if, for all $1 \leq i \leq k$, the matching $M_i = M \cap E[G_i]$ is maximum in $G_i$, and every applicant $x$ is matched within $\first(x) \cup \second(x)$. 
\end{definition}

Notice that when there are no ties all these definitions are identical to the ones given in the previous section. Before proceeding to prove Theorem~\ref{theorem:characterization} in the context of ties we review some basic notions of matching theory.

The following definitions are all with respect to a given matching $M$. An \emph{alternating path} is a simple path that alternates between matched and free edges. An \emph{augmenting path} is an alternating path that starts and ends with a free vertex. An \emph{exchange path} is an alternating path that start with a matched edge and ends with a free vertex. We can update $M$ along an augmenting or exchange path $P$ to get the matching $M \oplus P$, the symmetric difference of $M$ and $P$. 

In our proofs we will make use of the following property of non-critical nodes, which is a part of the Gallai-Edmonds decomposition \cite{S03}.

\begin{lemma} \label{lemma:critical} Let $G$ be a bipartite graph and let $v$ be a non-critical vertex. Then, in every maximum matching $M$ of $G$ there exists an alternating path starting at $v$ and ending with a free vertex.
\end{lemma}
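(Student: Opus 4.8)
The plan is to prove this standard Gallai--Edmonds fact by a symmetric-difference argument, splitting on whether $v$ is matched by the given maximum matching $M$. If $v$ is free in $M$, then the degenerate path consisting of $v$ alone already starts at $v$ and ends at a free vertex, so there is nothing to do. Hence I would immediately reduce to the case where $v$ is matched in $M$, and the whole content of the argument lives here.

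To handle the matched case I would exploit the hypothesis that $v$ is non-critical: by definition there exists \emph{some} maximum matching $M'$ of $G$ in which $v$ is free. First I would form the symmetric difference $M \oplus M'$ and recall the classical structure of such a set: since each vertex has degree at most one in $M$ and at most one in $M'$, every component of $M \oplus M'$ is either an even alternating cycle or an alternating path. Because $M$ and $M'$ are both maximum we have $|M| = |M'|$, so no component can be an augmenting path; consequently every path component carries equally many $M$-edges and $M'$-edges and therefore has even length.

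Next I would locate $v$ within this structure. Since $v$ is matched in $M$ but free in $M'$, the $M$-edge at $v$ lies in $M \oplus M'$ while no $M'$-edge at $v$ does, so $v$ has degree exactly one in $M \oplus M'$ and is thus an \emph{endpoint} of some path component $P$. Reading $P$ out from $v$, it begins with the $M$-edge at $v$ and, being of even length, terminates with an $M'$-edge at its other endpoint $u$. With respect to $M$ the path $P$ alternates correctly: its $M$-edges are matched edges and its $M'$-edges (which are not in $M$) are free edges, so $P$ is an $M$-alternating path starting with a matched edge at $v$.

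The step I expect to be the crux is verifying that the far endpoint $u$ is free \emph{in $M$}, not merely in $M'$. Here I would argue as follows: $u$ has degree one in $M \oplus M'$, and its unique incident edge there is the terminal $M'$-edge of $P$. If $u$ were matched in $M$, its $M$-mate edge would either coincide with this $M'$-edge --- impossible, since that edge lies in $M \oplus M'$ and hence is not in $M$ --- or be a second $M'$-incident edge at $u$ (if shared by $M'$) or give $u$ degree two in $M \oplus M'$; all of these contradict either the matching property of $M'$ or $u$'s degree being one. Thus $u$ is free in $M$, and $P$ is exactly the desired alternating path from $v$ ending at a free vertex. I would close by noting that this is precisely an exchange path in the terminology introduced above.
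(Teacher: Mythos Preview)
Your proof is correct and follows essentially the same approach as the paper's: handle the trivial case $v$ free in $M$, then take a maximum matching $M'$ with $v$ free, form $M\oplus M'$, and read off the even-length alternating path component at $v$ whose other endpoint is free in $M$. You simply spell out in more detail what the paper compresses into a single sentence.
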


\begin{proof}
If $v$ is free in $M$, the lemma is trivially true, so assume that $v$ is matched in $M$.
Since $v$ is non-critical there is a maximum matching $O$ in which $v$ if free. In $O \oplus M$ there must be an alternating path \wrt $M$ of even length that starts at $v$ and ends with a vertex free in $M$.
\end{proof}

The next two lemmas prove Theorem~\ref{theorem:characterization} under the new definition of well-formed.

\begin{lemma} \label{lemma:first-ties} Let $M$ be a popular matching. Then, for all $i$, $M_i = M \cap E[G_i]$ is maximum in $G_i$.
\end{lemma}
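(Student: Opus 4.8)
The plan is to prove Lemma~\ref{lemma:first-ties} by induction on $i$, closely mirroring the structure of the strict-preference proof in Lemma~\ref{lemma:first}, but replacing the single-job promotion arguments with alternating-path arguments supplied by the matching-theoretic tools just introduced. For the base case ($i=1$), suppose toward a contradiction that $M_1 = M \cap E[G_1]$ is not maximum in $G_1$. Then there is an augmenting path in $G_1$ with respect to $M_1$; I would use this path, together with Lemma~\ref{lemma:critical} applied to the endpoints, to build a modification of $M$ that promotes a net positive weight of $C_1$ applicants. Since all applicants involved lie in $C_1$ and carry weight $w_1$, and the only demotions fall on applicants of weight at most $w_1$ (or produce a strictly larger matched set within $G_1$), the satisfaction strictly increases, contradicting popularity.

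For the inductive step, I assume $M_j$ is maximum in $G_j$ for all $j < i$ and suppose $M_i$ is not maximum in $G_i$. The key observation is that augmenting $M_i$ forces the augmenting path to touch jobs and applicants from the lower categories, but here I can invoke the inductive hypothesis: since $M_{i-1}$ is already maximum in $G_{i-1}$, any augmenting path for $M_i$ in $G_i$ must genuinely increase the matched set and therefore cannot be ``absorbed'' within $G_{<i}$ without raising the matching size there too. I would trace the augmenting path and, whenever it enters a job that is critical in some $G_{<i}$, use the non-critical/critical structure built into the definitions of $\first$ and $\second$ — specifically that $\first(x)$ for $x \in C_i$ consists of jobs non-critical in all $G_{<i}$, together with Lemma~\ref{lemma:critical} — to reroute through an alternating path to a free vertex. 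The net effect promotes $C_i$ applicants (gaining $w_i$ each) while any displaced applicant in a lower category $C_s$ with $s<i$ gets re-promoted along its own alternating path, so the weights $w_s$ cancel, leaving a strictly positive change of satisfaction.

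The main obstacle, as I see it, will be carefully controlling the bookkeeping of who is promoted and who is demoted when the augmenting path weaves through several categories at once. In the strict case the path structure was rigid (each $x_i \in C_s$ forced $x_{i+1} \in C_{<s}$, guaranteeing simplicity), but with ties the maximum matchings in each $G_j$ are not unique, and the augmenting path for $G_i$ can interact with edges of $M$ lying in many lower graphs simultaneously. The crux is to argue that every applicant demoted in a lower category $C_s$ can be compensated by promoting another applicant of the \emph{same} category $C_s$ to an equally-ranked job — this is exactly where maximality of $M_s$ in $G_s$ (from the inductive hypothesis) and Lemma~\ref{lemma:critical} are both essential, since they guarantee the relevant alternating paths to free vertices exist. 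Once this compensation is established so that all contributions from categories below $i$ cancel in pairs, the only uncancelled term is the strictly positive net gain from the $C_i$ augmentation, which yields the desired contradiction with popularity and completes the induction.
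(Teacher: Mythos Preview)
Your induction scheme and the idea of using Lemma~\ref{lemma:critical} to produce compensating alternating paths are the same as the paper's. However, your plan addresses only the case where the applicant displaced at the end of the augmenting path lies in a \emph{strictly lower} category $C_s$ with $s<i$; you repeatedly say ``lower category'' and argue the $w_s$ contributions cancel. The case you do not treat---and which the paper explicitly calls ``the most involved''---is when the augmenting path $P$ for $M_i$ ends at a job $p$ with $M(p)=y\in C_i$, i.e., the displaced applicant is in the \emph{same} category as the one being promoted. Here your compensation mechanism does not apply: you would like to promote $y$ to some $q\in\first(y)$ and reroute, but $q$ is only guaranteed non-critical in $G_{i-1}$, and after updating along $P$ the resulting matching is no longer maximum in $G_i$, so the exchange path out of $q$ may collide with $P$ (or with other augmenting paths of $M_i$) in ways a single-path argument cannot control.

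The paper's fix is a genuinely additional idea: instead of working with one augmenting path, it takes a maximum matching $O$ in $G_i$ minimizing $|M_i\oplus O|$, so that $M_i\oplus O$ decomposes into disjoint augmenting paths $P_1,\dots,P_a$, and then argues about all of them at once, carefully handling the possibility that the first job $q_j$ of the displaced applicant $y_j$ at the end of $P_j$ lies on some other $P_h$, or that the rerouting exchange path in $G_{i-1}$ ends at the old mate of some $x_h$. Without this global argument, the ``bookkeeping'' you flag as the main obstacle is not merely tedious but actually unresolved in the $y\in C_i$ case.
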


\begin{proof} 
By induction on $i$. For the base case, suppose that $M_1$ is not maximum, then there must be an augmenting path in $G_1$ \wrt $M_1$ starting at $x \in C_1$ and ending at $p$. If $p$ is free in $M$ or $p = M(x)$ then we can update $M$ along the path\footnote{While $P$ is augmenting \wrt $M_i$, it may not be augmenting \wrt $M$ since $x$ could be matched in $M$. This can be easily fixed by removing $(x,M(x))$ from $M$ before doing the update. For the sake of succinctness, from now on we assume that such implicit fix always occurs when updating along a path that ends with a free edge leading to a matched node.} to improve the satisfaction by $w_1$, so let us assume that there exists $y = M(p) \neq x$. If $y \in C_{s>1}$ then updating $M$ along $P$ gives us an improvement in satisfaction of $w_1 - w_s > 0$. Suppose then that $y \in C_1$, and let $q$ be a job in $\first(y)$. Since $(y,p) \notin M_1$, applicant $y$ must prefer $q$ over $p$. If the $q$ belongs to $P$ then we can create an alternating cycle by replacing the section of $P$ before $q$ with the edge $(y,q)$. Updating the matching along the cycle improves the satisfaction by $w(y)$. If $q$ does not belong to $P$ then appending $(y,q)$ to $P$ and updating along the resulting path changes the satisfaction by $w_1+w_1 - w(M(q)) > 0$. In every case we reach the contraction that $M$ is not popular, thus $M_1$ must be maximum in $G_1$.

For the inductive step, if $M_i$ is not maximum we can find like before an augmenting path $P$ starting at $x \in C_i$ and ending at a job $p$. If $p$ is free in $M$ or $M(p) \in C_{>i}$ updating along $P$ improves the satisfaction, so assume that $p$ is matched in $M$ to $y \in C_{s \leq i} -x$. Let $q$ be a job in $\first(y)$, if $q$ belongs to $P$ then we can construct an alternating cycle to improve the satisfaction, so assume that $q \notin P$. Since $(y,p) \notin M_i$, we know that $p \notin f(y)$; by inductive hypothesis $p$ must be strictly worse than $q$. We update $M$ along $P$ to get $M'$; note that $y$ is free in $M'$. By inductive hypothesis $M_s$ is maximum in all $G_{s<i}$, therefore $M'_s$ is maximum in $G_{s}$ as well. There are three cases to consider. First, if $y \in C_1$ then we can promote $y$ to $q$ and demote whoever is matched to $q$, the total change in satisfaction is $w_i + w_1 - w(M'(q)) > 0$. Second, consider the case $y \in C_{1<s<i}$. Note that $q$ cannot be free in $M'_s$, as this would contradict the maximality of $M'_s$. If $M'(q) \in C_s$ we are done since promoting $y$ to $q$ gives a total change in satisfaction of $w_i + w_s - w_s > 0$  \wrt $M$, so assume $M'(q) \in C_{<s}$. By definition of $\first(y)$, $q$ is non-critical in $G_{s-1}$. Thus we can find an exchange path~$Q$ \wrt $M'_{s-1}$ starting at $q$ and ending at a job $r$ free in $M'_{s-1}$. Note that $r$ cannot be free in $M'_s$, otherwise $M'_s$ would not be maximum in $G_s$, thus $M(r) \in C_s$. Updating $M'$ along $Q$ to free $p$ and promoting $y$ to $p$ gives us a new matching $M''$. The satisfaction of $M''$ \wrt $M$ is $w_i + w_s - w_s > 0$, thus $M$ cannot be popular.

Finally, we need to consider the most involved case, namely, $y \in C_i$. Note that we cannot use the argument given above because $M'_i$ need not be maximum in $G_i$. In order to fix this let us forget about $M'$ and consider a matching $O$ maximum in $G_i$; furthermore, assume $O$ minimizes $|M_i \oplus O|$. The set $M_i \oplus O$ is made up of paths $P_1, P_2, \ldots, P_a$, each of which is augmenting \wrt $M_i$. By inductive hypothesis $M$ is maximum in $G_{<i}$ so each $P_j$ starts at $x_j \in C_i$ and end at some $p_j$, both free in $M_i$. Let $y_j$ be $M(p_j)$ and $q_j$ be a job in $f(y_j)$. Assume $y_j \in C_i - \{x_1, \ldots,x_a\}$ and $q_j \notin P_j$, otherwise we fall in one of the cases we have already covered. Now suppose that $q_1$ belongs to some path $P_h$ for $h\neq 1$; then we can replace the portion of the path of $P_h$ before $q_1$ with the edge $(y_1,q_1)$ update $M$ along the resulting path and then update along $P_1$ to improve the satisfaction by $w_i + w_i - w_i> 0$. Thus we can assume that $q_1 \notin P_h$ for all $h$. At this point we can safely update $M$ along all paths $P_1, \ldots, P_a$ to get a matching $M'$, which is maximum in $G_{\leq i}$. Finally, we can use the argument above on $y_1$. Namely, find an exchange path $Q$ \wrt $M_{i-1}$ from $q_1$ to $r$; if $r \neq M'(x_i)$ for all $i$ then the same argument applies. On the other hand, if $r = M'(x_i)$ for some $i$ then give $x_i$ its original job $M(x_i)$, update the matching along $Q$ and promote $y_1$ to $q$, which improves the satisfaction by $w_i$. Notice that in the last exchange we assumed $M(x_i)$ was free in $M'$, or equivalently, that $M(x_i) \neq p_j$ for all $j$ and $i$. Indeed, if $M(x_i) = p_j$ for some $j$ and $i$ then we can join together $P_j$ and $P_i$ using the edge $(x_i, p_j)$ and update $M$ along the resulting path to improve the satisfaction by $w_i + w_i + w_i - w_i > 0$.

In every case we have reached a contradiction, thus the lemma follows.
\end{proof}

\begin{lemma} \label{lemma:second-ties} Let $M$ be a popular matching, then every applicant $x$ is matched within $\first(x) \cup \second(x)$.
\end{lemma}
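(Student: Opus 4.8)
The plan is to mirror the strict-case proof of Lemma~\ref{lemma:second}, replacing the appeal to Lemma~\ref{lemma:first} by Lemma~\ref{lemma:first-ties} and replacing single promotions by shifts along alternating and exchange paths, whose existence is furnished by the Gallai--Edmonds fact (Lemma~\ref{lemma:critical}). Throughout I would lean on one structural observation: in $G_i$ every applicant $z \in C_j$ with $j \le i$ is adjacent only to the jobs of $\first(z)$, all of which are tied on $z$'s list. Consequently, shifting $M_i$ along any alternating path contained in $G_i$ leaves every \emph{internal} applicant of the path indifferent, so such a shift changes the satisfaction only through its endpoints, and in fact by $0$ when it ends at a free job.

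First I would record the easy structural consequences of Lemma~\ref{lemma:first-ties}. If $x \in C_i$ is critical in $G_i$, then since $M_i = M \cap E[G_i]$ is maximum, $x$ is matched in $M_i$, i.e.\ to a job of $\first(x)$, which is exactly what the lemma asserts (here $\second(x) = \emptyset$). Next, for \emph{every} applicant $x \in C_i$ I rule out $M(x)$ being ranked above $\first(x)$ or strictly between $\first(x)$ and $\second(x)$: any such job $q$ is critical in some $G_j$ with $j \le i$, hence matched in the maximum matching $M_j \subseteq M$ to an applicant of $C_{\le j}$ distinct from $x$ (when $j<i$ by weight, and when $j=i$ because $q \notin \first(x)$ forces $(x,q) \notin E[G_i]$), contradicting $M(x)=q$. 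It therefore remains only to exclude the possibility that some non-critical $x \in C_i$ is matched to a job $p$ strictly worse than $\second(x)$.

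Assume this for contradiction and fix $q \in \second(x)$; recall $q$ is non-critical in all $G_{\le i}$. If $q$ is free in $M$, or if $M(q) \in C_{s>i}$, then promoting $x$ to $q$ (and, in the second case, demoting $M(q)$) already improves the satisfaction by $w_i$, respectively $w_i - w_s > 0$. So suppose $y := M(q) \in C_{s \le i}$. If $q \in \first(y)$, then $(y,q) \in M_i$, so $q$ is matched in $M_i$; being non-critical there, Lemma~\ref{lemma:critical} yields an alternating path in $G_i$ from $q$ to a free job, and shifting $M_i$ along it frees $q$ at zero cost, after which promoting $x$ to $q$ gains $w_i$. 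If instead $q \notin \first(y)$, then the previous paragraph forbids $y$ from being matched above or between, so $q$ is strictly worse than $\first(y)$ and $y$ strictly prefers every $q' \in \first(y)$ over $q$. Since such a $q'$ is non-critical in $G_{s-1}$, I would use Lemma~\ref{lemma:critical} to free $q'$ by an exchange path in $G_{s-1}$ (again at zero cost), promote $y$ from $q$ to $q'$ (gaining $w_s$), and finally promote $x$ into the now-vacated $q$ (gaining $w_i$), for a net change of $w_i + w_s > 0$. In every case $M$ is not popular, the desired contradiction.

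The hard part will be the bookkeeping when the paths produced by Lemma~\ref{lemma:critical} overlap the distinguished vertices $x$, $p$, $q$, or $q'$, or when the job $q'$ we wish to free is occupied in $M$ but not in $M_{s-1}$. These are precisely the degenerate situations treated with care in the proof of Lemma~\ref{lemma:first-ties} (for example, when a path endpoint coincides with the mate of an applicant being moved): one either reroutes by splicing the offending edge into the path, or observes that the coincidence itself induces a consistent reassignment whose net cost is still dominated by $w_i$. Verifying that each rearrangement is a genuine matching and that the accounting never turns negative---using that every internal move is within a tie and that at most one applicant of weight $\le w_i$ is ever demoted---is the delicate step; the strict inequalities $w_i > w_{i+1}$ then guarantee that the resulting satisfaction is strictly positive.
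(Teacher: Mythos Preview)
Your strategy is the paper's: case-split on the category $s$ of $y=M(q)$ and use Lemma~\ref{lemma:critical} to shift an alternating path that frees a job into which first $y$ and then $x$ can be promoted. Two small corrections are worth noting. Your ``zero cost'' for the exchange path in $G_{s-1}$ is optimistic: the free-in-$M_{s-1}$ endpoint $r$ is in general matched in $M_s$ to some $z\in C_s$ (since $M_s$ is maximum), so the shift demotes $z$ and the true net is $w_i+w_s-w_s=w_i$, not $w_i+w_s$; the conclusion survives, but the accounting should read as in the paper. Also, the appeal to $G_{s-1}$ is vacuous when $s=1$; the paper handles $y\in C_1$ separately by directly promoting $y$ to $q'\in\first(y)$ and demoting $M(q')$, for a net $w_i+w_1-w(M(q'))>0$. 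On the other hand, your explicit sub-case $q\in\first(y)$ covers a situation the paper's sentence ``the job $p$ is strictly worse than any job $q\in\first(y)$'' passes over.
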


\begin{proof} Recall that $\second(x)$ is undefined only if every job in $x$'s preference list is critical in some $G_{\leq i}$. But $l(x)$, $x$'s the last resort job, is critical in $G_i$ if and only if $\first(x) = \{ l(x) \}$, in which case $M(x) = l(x)$ by Lemma~\ref{lemma:first-ties}. Let us assume then that $s(x)$ is well defined.

For the sake of contradiction assume that the lemma does not hold for $x \in C_i$. Note that all jobs which $x$ prefers over $\second(x)$ are critical in $G_i$, among these, only $\first(x)$ have an edge to $x$ in $G_i$. Thus if Lemma~\ref{lemma:first-ties} is to hold, $M(x)$ must be strictly worse than any job $p \in \second(x)$.

Consider the applicant $y = M(p)$. If $y \in C_{s> i}$ then $y$ can be demoted and $x$ promoted to $p$ to improve the satisfaction by $w_i - w_s > 0$. If $y \in C_{1<s\leq i}$, the job $p$ is strictly worse than any job $q \in \first(y)$. Using the fact that $q$ is non-critical in $G_{s-1}$ we find an alternating path in $M_{s-1}$ to a free (in $M_{s-1}$) job $r$ which must be matched in $M_s$ to $z \in C_s$. Updating along the path and promoting $x$ to $p$ improves the satisfaction by $w_i + w_s - w_s > 0$. The case where $y \in C_1$ is simpler as we can promote $y$ to $q$ and demote whoever was matched to $q$; the change in satisfaction is $w_i + w_1 - w(M(q)) > 0$.
\end{proof}

This finishes the proof of Theorem~\ref{theorem:characterization} under the new definition of well-formed matching. Thus every popular matching is contained in $G$, the graph consisting of those edges between applicants and their first and second jobs. Because the new definition of well-formed matching generalizes the one for strict preferences, we again encounter the problem that not every well-formed matching is popular. We proceed as before, pruning certain edges which are not part of any popular matching. Finally, we show that every well-formed matching in the pruned graph is popular.

It is time to revise the definition of promotion path. Let $M$ be a well-formed matching. Our promotion path starts at $p_0$, a job critical in $G_{i_0}$, but non-critical in all $G_{< i_0}$. We find an alternating path in $G_{i_0}$ \wrt $M_{i_0}$ from $p_0$ to $x_0$ which starts and ends with a matched edge; we augment along the path to get $M'$. Let $p_1$ be a job which according to $x_0$ is better than $\first(x_0)$ (or as good, but not in $\first(x_0)$), moreover let $p_1$ be critical in $G_{i_1}$, but non-critical in all $G_{<i_1}$. Since $x_0 \in C_{>i_1}$, the matching $M'_{j}$ is still maximum in $G_{\leq i_1}$. Find a similar alternating path in $G_{i_1}$ \wrt $M'_{i_1}$ from $p_1$ to $x_1$, update $M'$, and so on. Finally, every applicant $x_i$ is assigned to $p_{i+1}$, except $x_s$, the last applicant in the path, who is demoted. The cost of the path is defined as the satisfaction of $M$ with respect to $M'$, or equivalently, $w(x_s)$ minus the weight of those applicants $x_{i<s}$ who find $p_{i+1}$ strictly better than $\first(x_i)$ (recall that $p_{i+1}$ may be as good as, but not in $\first(x_i)$). This is the price to pay, in terms of satisfaction, to free $p_0$ using the path.

To see why this is the right definition, let $M$ be a well-formed matching and $M'$ be any other matching. Suppose $y$ prefers $M'$ over $M$, we will construct a promotion path starting at $p_0 = M'(y)$. Since $M$ is well-formed, $p_0$ must be critical; let $i_0$ be the smallest $i$ such that $p_0$ is critical in $G_i$. Taking $M_{i_0} \oplus M'_{i_0}$ we can find an alternating path that starts with $(p_0, M(p_0) )$ and ends at $x_0$ which is free in $M'_{i_0}$---the path cannot end in a job that is free in $M_{i_0}$ because $p_0$ is critical. Either $x_0$ gets a worse job under $M'$, in which case the promotion path ends, or gets a job $p_1$ which is better than $\first(x_0)$, or just as good but does not belong to $\first(x_0)$. We continue growing the path until we run into an applicant $x_s$ who prefers $M$ over $M'$, notice that since $i_j > i_{j+1}$ we  are bound to find such an applicant. Now, if the cost of the path is less than $w(y)$ then we know the well-formed matching $M$ is not popular. On the other hand, if the cost of the path induced by $y$ is at least $w(y)$, for all such $y$, we can claim that $M'$ is not more popular than $M$.

We are ready to discuss the algorithm {\sc prune-ties} for pruning the graph in the presence of ties, which is given in Figure~\ref{figure:pruning-ties}. In the $i$th iteration we prune some edges incident to applicants in $C_i$ making sure these edges do not belong to any popular matching, and label those jobs that became critical in $G_i$ such that Invariant~\ref{invariant:label-ties} holds.

\begin{invariant} \label{invariant:label-ties} Let $p$ be a critical job in $G_i$, and let $M$ be any matching in the pruned graph and maximum in all $G_{j\leq i}$, i.e., $M_j= M \cap E[G_j]$ is maximum in $G_j$ for all $j \leq i$. A minimum cost promotion path out of $p$ \wrt $M$ has cost exactly $\lab(p)$.
\end{invariant}

In the first iteration we do not prune any edges from $G_1$. Let $p$ be a critical job in $G_1$, and $M$ be a maximum matching in $G_1$. Every alternating path \wrt $M$ out of $p$ must end in some applicant in $C_1$, therefore, Line 1 sets $\lambda(p) = w_1$.

Recall that $\minlabel(x,p) = \min_q \lab(q)$ where $q$ ranges over jobs \emph{strictly better} than $p$ in $x$'s preference list. In addition, let us define $\equivlabel(x) = \min_q \lab(x)$ where $q$ ranges over jobs not in $\first(x)$ that have the same rank as other jobs in $\first(x)$.

For the $i$th iteration we assume Invariant~\ref{invariant:label-ties} holds for those jobs critical in some $G_{<i}$. Suppose there exists an applicant $x \in C_i$ such that $\minlabel(x, \first(x)) < w_i$. Then in every well-formed matching in the pruned graph we can find a promotion path to free a job $p$ that $x$ prefers over $\first(x)$, and then promote $x$ to $p$. This improves the satisfaction by $w_i - \minlabel(x, \first(x)) > 0$. Therefore, no popular matching exists. Lines 3--5 check for this.

Consider a vertex $x \in C_{j \leq i}$ non-critical in $G_i$. We claim that if $\minlabel(x, \first(x) )  < w_j + w_i$ or $\equivlabel(x) < w_i$ then the edges $(x,\first(x))$ cannot be part of any popular matching and can thus be pruned. Indeed, let $O$ be a matching maximum in all $G_{\leq i}$ and included in the pruned graph such that $O(x) \in \first(x)$; we will show that $O$ cannot be a subset of any popular matching. Because $x$ is non-critical in $G_i$ we know there is an exchange path \wrt $O$ from $x$ to some applicant $y \in C_{s}$ such that $j \leq s \leq i$, for otherwise $O_s$ would not be maximum in $G_s$. Augment along the path to get $O'$. While the matching $O'$ may not be maximum in $G_j$ (in case $j<s$), it is still maximum in all $G_{<j}$. Invariant~\ref{invariant:label-ties} tells us we can find a promotion path to free a job $p$ that $x$ can be promoted to; note that because $y \in C_{s\geq j}$ the changes needed to free $p$ do not affect $y$. This improve the satisfaction of the matching, therefore $O$ cannot be included in any popular matching. Since the edges $(x,f(x))$ cannot be part of any popular matching they can safely be pruned. Lines 6--8 check this.

\begin{figure}
\center
\psframebox[boxsep=true,framearc=0.05,framesep=8pt]{
\begin{minipage}{2cm}
\begin{algorithm}{prune-ties}{G}
\mbox{All critical jobs in $G_1$ get a label of $w_1$.} \\
\begin{FOR}{i = 2 \TO k}
   \begin{FOR}{x \in C_i}
      \begin{IF}{ \minlabel(x, \first(x) )  < w_i }
          \RETURN \mbox{``no popular matching exists''}
      \end{IF}
   \end{FOR} \\
   \begin{FOR}{x \in C_{j\leq i} \mbox{ non-critical in } G_i} 
     \begin{IF}{ \minlabel(x, \first(x) )  < w_j + w_i \mbox{ or } \equivlabel(x) < w_i }
       \mbox{prune the edges between $x$ and $\first(x)$}
      \end{IF}
   \end{FOR} \\
   \begin{FOR}{p \mbox{ critical in } G_i \mbox{, but non-critical in } G_{<i}}
      \mbox{let } S \= \{ x \, | \, \exists \mbox{ alternating path from $x$ to $p$} \} \\
      \lab(p) \= \min_{x \in S} \big\{ w_i, \minlabel(x, \first(x) ) - w(x), \equivlabel(x) \big\}
   \end{FOR}
\end{FOR} \\
\begin{FOR}{x \in \A \mbox{ such that } \minlabel(x,\second(x)) < w(x)}
   \mbox{prune the edges between $x$ and $\second(x)$}
\end{FOR}
\end{algorithm}
\end{minipage}
}
\caption{Pruning the graph with ties.\label{figure:pruning-ties}}
\end{figure}

Finally, we must compute $\lab(p)$ for jobs $p$ that are critical in $G_i$, but non-critical in all $G_{<i}$. A promotion path out of $p$ must begin with an alternating path starting and ending with a matched edge, going from $p$ to some applicant $x$. Since $p$ is non-critical in $G_{i-1}$ there must be alternating path in $G_i$ to some applicant in $C_i$, thus $\lab(p) \leq w_i$. Note that if $x \in C_j$ is non-critical then $\minlabel(x, \first(x) )  \geq w_j + w_i$ and $\equivlabel(x) \geq w_i$, otherwise the edges $(x, \first(x))$ would have been pruned earlier. We shall explore alternating paths out of $p$ into $x$ that start and end with a matched edge in some arbitrary matching $M_i$ maximum in $G_i$. In fact we only care about reaching applicants $x$ critical in $G_i$. Since $M_i$ is an arbitrary matching, we must argue that a similar path can always be found in any matching $O$ included in the pruned graph, maximum in all $G_{\leq i}$. To show this, augment along the path to get $M'_i$, the resulting matching is not maximum in $G_i$ any more. Take $M'_i \oplus O$, and consider the alternating path out of $p$. This path must end at an applicant $y$, matched in $O_i$, but free in $M'_i$---otherwise, if it ends in a job free in $O_i$ then $p$ is non-critical. For the sake of contradiction suppose that $y \neq x$. Since $x$ is critical, there must be a path in $M'_i \oplus O_i$ from $x$ to $z$, such that $z$ is free in $O_i$; which contradicts the fact that $x$ is critical in $G_i$. Thus we set $\lab(p)$ to the minimum of $w_i$, $\minlabel(x, \first(x) ) - w(x)$ and $\equivlabel(x)$, for those applicants $x$ that can be reached from $p$ with an alternating path starting and ending with a matched edge. Lines 9--11 do this.

The last thing to consider are non-critical applicants $x$ who may get their second job. We can promote them to a job $p$ strictly better than $\second(x)$ and start a promotion path from there. If such exchange improves the satisfaction then the edges $(x, \second(x))$ must be pruned. This is done in Lines 12--13.

\begin{theorem} Let $G'$ the resulting graph after running {\sc prune-ties}. Then a matching $M$ is popular if and only if $M$ is well-formed and $M \subseteq G'$.
\end{theorem}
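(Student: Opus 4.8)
The plan is to prove both directions separately, following the template of the strict-preference case (Theorem~\ref{theorem:strict}). For the ``only if'' direction I would combine two facts: first, that a popular matching is well-formed, which is exactly Theorem~\ref{theorem:characterization} in its tie-aware form (Lemmas~\ref{lemma:first-ties} and~\ref{lemma:second-ties}); and second, that no edge deleted by {\sc prune-ties} can belong to any popular matching, so a popular matching survives the pruning and hence $M \subseteq G'$. The ``if'' direction is the converse: given any competitor $M'$, I would show $M'$ is not more popular than $M$ by charging each applicant who prefers $M'$ against a promotion path whose cost is at least that applicant's weight.

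The backbone of both directions is Invariant~\ref{invariant:label-ties}, which I would establish first, by induction on $i$. The base case is Line~1: a promotion path out of a $G_1$-critical job must end in $C_1$, so its cost is exactly $w_1$. For the inductive step, consider a job $p$ that first becomes critical in $G_i$. By Lemma~\ref{lemma:critical}, some applicant of $C_i$ is reachable from $p$ by a matched-edge alternating path, giving $\lab(p) \le w_i$; the remaining contributions to the minimum come from extending the path into a strictly-better or equal-rank job of a reachable applicant $x$, which by the induction hypothesis costs $\minlabel(x,\first(x)) - w(x)$ or $\equivlabel(x)$. The delicate point is that the reachable set $S$, and hence the minimum cost, must be shown to be the same for every matching $O$ that is maximum in all $G_{\le i}$ and lies in the pruned graph; this is precisely the alternating-path transfer argument sketched before the theorem—augment along a path found in one arbitrary maximum matching $M_i$, then chase the symmetric difference $M'_i \oplus O_i$ to recover an equivalent path—and it is what makes $\lab(p)$ well defined independently of $M$. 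I would also record that a promotion path out of a $G_i$-critical job uses only applicants in $C_{\le i}$, since the critical indices $i_0 > i_1 > \cdots$ strictly decrease; as pruning in later iterations removes only edges incident to $C_{>i}$, these paths are untouched, so the invariant, once established at iteration $i$, is preserved.

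With the invariant in hand, the pruning steps justify the ``only if'' direction and the promotion-path bound gives the ``if'' direction. Lines~6--8 delete an edge $(x,\first(x))$ for a non-critical $x \in C_{j\le i}$ only when $\minlabel(x,\first(x)) < w_j + w_i$ or $\equivlabel(x) < w_i$; routing an exchange path from $x$ to an applicant in $C_{\le i}$ and then invoking the invariant to free a better job yields a strictly more satisfying matching, so such an edge lies in no popular matching, and Lines~12--13 handle $(x,\second(x))$ analogously—hence $M \subseteq G'$. Conversely, take a well-formed $M \subseteq G'$ and an arbitrary $M'$, and for each $y$ preferring $M'$ over $M$ bound the cost of the promotion path induced at $p_0 = M'(y)$. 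Since $M$ is well-formed, $p_0$ is critical and the induced path costs at least $\lab(p_0)$. If $M(y) \in \first(y)$ then $p_0$ is strictly better than $\first(y)$, so $\lab(p_0) \ge \minlabel(y,\first(y)) \ge w(y)$ by Lines~3--5; if $M(y) \in \second(y)$ then $p_0$ is strictly better than $\second(y)$, and since $(y,\second(y))$ survived Lines~12--13 we get $\lab(p_0) \ge \minlabel(y,\second(y)) \ge w(y)$. Either way the induced path costs at least $w(y)$, so $M'$ is not more popular than $M$.

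The step I expect to be the main obstacle is the inductive proof of Invariant~\ref{invariant:label-ties}, and within it the matching-independence of the reachable set $S$ and of the minimum promotion-path cost. Ties force us to work with maximum matchings and Gallai-Edmonds alternating paths rather than the rigid degree-two structure of the strict case, and one must argue carefully that a promotion path found in one maximum matching transfers to any other pruned, maximum-in-$G_{\le i}$ matching without changing its cost, handling both the $\minlabel$ contribution (strictly better jobs) and the $\equivlabel$ contribution (equal-rank jobs outside $\first(x)$) consistently.
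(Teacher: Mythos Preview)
Your proposal is correct and follows essentially the same approach as the paper: the ``only if'' direction combines Theorem~\ref{theorem:characterization} (via Lemmas~\ref{lemma:first-ties} and~\ref{lemma:second-ties}) with the justification of each pruning step, and the ``if'' direction bounds the cost of the promotion path induced by any $y$ preferring $M'$ over $M$ by $w(y)$ using Lines~3--5 or Lines~12--13 according to whether $M(y)\in\first(y)$ or $M(y)\in\second(y)$. Your explicit plan to first establish Invariant~\ref{invariant:label-ties} by induction, including the matching-independence of the reachable set $S$, corresponds exactly to the argument the paper develops in the discussion preceding the theorem rather than in the proof paragraph itself.
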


We have shown that if there exists a popular matching it must be well-formed and be contained in $G'$. The proof that every well-formed matching $M$ in the pruned graph is popular is similar to that for strict preferences. Let $M'$ be any other matching, we argue that $M'$ is not more popular than $M$. Suppose $x$ prefers $M'$ over $M$, this induces a promotion path out of $M'(x)$ with respect to $M$. If $x$ gets $f(x)$ in $M$ then the cost of such a path is at least $w_i$. Otherwise, $M(x)=\second(x)$ and Lines 12-13 make sure the cost of such promotion path is at least $w_i$. Since this holds for every applicant $x$, $M'$ cannot be more popular than $M$. 

So far we have been concerned with showing the correctness of the algorithm, in the next section we show how to implement these ideas efficiently.

\subsection{Implementation}

First we need to compute $\first(x)$ and $\second(x)$ for every applicant $x \in A$; we do so in iterations. For $x \in C_1$ computing $\first(x)$ is trivial. Now build $G_1$ and find a maximum matching $M_1$ in $G_1$. Using the algorithm of Hopcroft and Karp \cite{HK73} this can be done in $O(\min(\sqrt{n},|M_1|)m)$ time. The set of critical jobs in $G_1$ can be computed in $O(m)$ time by growing a Hungarian tree \cite{CombOpt-book} from those jobs that are free in $M_1$: By Lemma~\ref{lemma:critical} those jobs that are reachable from a free job by an alternating path must be non-critical and those jobs that are not reachable from any free job must be critical. Using this information compute $\second(x)$ for all $x \in C_1$ and $\first(y)$ for all $y \in C_2$. Now construct $G_2$, augment $M_1$ to get a maximum matching $M_2$ in $G_2$, and so on. Using Hopcroft-Karp to compute $M_{i+1}$ from $M_i$ takes $O( \min( \sqrt{n} , |M_{i+1}| - |M_{i}| ) m )$ time. Adding up over all categories we get overall $O( \min( k \sqrt{n}, n) m)$ time.

The next lemma argues that {\sc prune-ties} can be implemented to run in $O(k m + n \log n)$ time. The procedure makes use of the matchings $M_1, \ldots, M_k$ found while computing $f(x)$ and $s(x)$ and the list of critical jobs in each $G_i$.

\begin{lemma} Given a matching $M_i$ maximum in $G_i$, the $i$th iteration of {\sc prune-ties} can be carried out in $O(m + |C_i| \log n)$ time.
\end{lemma}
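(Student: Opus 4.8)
The plan is to treat one pass of the main loop as three independent blocks---the feasibility test of Lines 3--5, the first--job pruning of Lines 6--8, and the labelling of Lines 9--11---and to bound each using the given matching $M_i$ together with the precomputed lists of critical jobs of $G_{i-1}$ and $G_i$. I keep every preference list sorted by rank as a one--time preprocessing step, so that for $x \in C_i$ the jobs ranked strictly above $\first(x)$ form a prefix; each of these is critical in some $G_{<i}$ and was therefore already labelled, and the jobs tying with $\first(x)$ but lying outside it are likewise already labelled. Hence $\minlabel(x,\first(x))$ and $\equivlabel(x)$ are a prefix--minimum and a same--rank minimum that I read off in $O(\deg_H(x))$ time. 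Executing Lines 3--5 this way over all of $C_i$ costs $O\!\left(\sum_{x\in C_i}\deg_H(x)\right)$, which is $O(m)$.

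For Lines 6--8 the decisive observation is that no applicant ever has to be re--examined. The thresholds $w_j+w_i$ and $w_i$ are non--increasing in $i$, while $\minlabel(x,\first(x))$ and $\equivlabel(x)$ are frozen once $\first(x)$ and the relevant labels are fixed; so an applicant that fails both pruning tests the first time it becomes non--critical will fail them at every later iteration. I therefore process only the applicants that are non--critical in $G_i$ yet critical in $G_{i-1}$. These I detect by growing a single alternating (Hungarian) forest from the free vertices of $M_i$ in $G_i$---$O(m)$ time, justified by Lemma~\ref{lemma:critical}---and comparing the resulting non--critical set with that of $G_{i-1}$; a per--applicant flag guarantees each applicant is acted on at most once over the whole algorithm. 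Each is then handled in $O(\deg_H(x))$ time, for $O(m)$ in this iteration.

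The heart of the pass, and the main obstacle, is Lines 9--11, where each job $p$ critical in $G_i$ but non--critical in all $G_{<i}$ must be set to $\lab(p)=\min_{x\in S_p}\{\,w_i,\ \minlabel(x,\first(x))-w(x),\ \equivlabel(x)\,\}$, with $S_p$ the critical applicants reachable from $p$ by an $M_i$--alternating path that begins and ends with a matched edge. Writing $h(x)=\min\{\minlabel(x,\first(x))-w(x),\ \equivlabel(x)\}$, which is independent of $i$, this collapses to $\lab(p)=\min\{w_i,\ \min_{x\in S_p} h(x)\}$, so only the reachability--minimum of $h$ is needed. I reduce all these minima to one computation by contracting each matched edge of $M_i$ into a node: an admissible alternating path becomes an ordinary directed path (one arc per unmatched $G_i$--edge), and ``reachable critical applicant'' becomes plain reachability in a digraph with $O(m)$ arcs. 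I then propagate the minimum of $h$ by processing critical applicants in increasing order of $h(x)$ out of a priority queue; popping $x$ triggers a reverse traversal that fixes $\lab(p)=\min\{w_i,h(x)\}$ for every still--unlabelled newly critical job $p$ reaching $x$. Since each job is fixed once, the traversals cost $O(m)$; the priority--queue operations are what carry the $\log n$ factor, and the intended accounting inserts each applicant into the structure only once, charging its $O(\log n)$ to the iteration $i$ in which it first appears, so iteration $i$ contributes $O(|C_i|\log n)$ and the whole run $O(n\log n)$.

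Summing the three blocks yields the claimed $O(m+|C_i|\log n)$. The delicate part, where I expect the real work to lie, is twofold and both pieces concern Lines 9--11. First is correctness: I must argue that exploring alternating paths in the single matching $M_i$ recovers the same set $S_p$---and hence the same minimum---as in every maximum matching of the pruned graph, and that confining attention to critical applicants loses nothing; both follow the structural argument given just before Invariant~\ref{invariant:label-ties}, by augmenting the exploration path into another maximum matching, taking the symmetric difference, and observing that an equivalent path exists and terminates at the same critical applicant. Second is the quantitative crux: bounding the priority--queue work at $O(|C_i|)$ per iteration rather than by the (possibly larger) number of newly critical jobs, which I expect to require the Gallai--Edmonds picture of how adding the $C_i$ applicants and augmenting $M_{i-1}$ to $M_i$ alters criticality. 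Verifying that the path equivalence survives the matched--edge contraction and that the sorted structure is touched only $|C_i|$ times is the step I would spend the most care on.
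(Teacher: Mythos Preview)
Your proposal is correct and is essentially the paper's own argument: compute $\minlabel$ and $\equivlabel$ in $O(\deg_H(x))$ per applicant to handle Lines~3--8 in $O(m)$, then label the newly critical jobs by growing alternating trees \emph{from} applicants processed in increasing order of $h(x)$, marking vertices so the total traversal is $O(m)$, while maintaining the sorted order across iterations so only the $|C_i|$ new applicants incur a $\log n$ insertion cost. Your ``priority queue with matched-edge contraction and reverse traversal'' is exactly the paper's ``balanced search tree plus Hungarian trees from applicants in sorted order''; the only caution is that if you literally extract-min every applicant each iteration you pay $O(n\log n)$ per pass, so you must iterate the persistent sorted structure rather than pop it---but your accounting (``insert each applicant once'') shows you already intend this.
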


\begin{proof}
At the beginning of the $i$th iteration we have available $\lab(p)$ for all jobs that are critical in some $G_{< i}$. Using this information it is easy to compute $\minlabel(x,\first(x))$ and $\equivlabel(x)$ in $O(\deg_H(x))$ time for each $x \in C_i$. With this information, Lines 3--8 can be done in $O(m)$ time.

Note that for each $p$ critical in $G_i$, but non-critical in all $G_{<i}$, Lines 10-11 can be implemented in linear time: Grow a Hungarian tree \wrt $M_i$  out of $p$, keep track of the applicants $x$ that can be reached from $p$, and find the one minimizing $\min \{w_i, \minlabel(x,\first(x)), \equivlabel(x) \}$. But we would like to carry out this computation \emph{for all} such jobs within the same time bounds. This can be done provided the applicants $x$ are sorted in non-decreasing value of $\min \{w_i, \minlabel(x,\first(x)), \equivlabel(x) \}$: Instead of growing Hungarian trees from the jobs we grow Hungarian trees from the matched applicants in sorted order. When growing a tree out of applicant $x$ we mark the nodes we visit and do not explore edges that lead to nodes that have already been marked. Suppose that job $p$ critical in $G_i$ was marked by applicant $x$ then clearly $\lab(p) = \min \{w_i, \minlabel(x,\first(x)), \equivlabel(x) \}$. Because a node is never explored after it has been marked, the overall work is $O(m)$. If we have a sorted list of applicants in $C_{<i}$ adding the applicants $C_i$ takes $O(|C_i| \log n)$ time if we maintain the list using a balanced search tree.
\end{proof}

Finally, after $G'$ is computed and pruned we must find a well-formed matching in it. This problem can be reduced to finding a rank-maximal matching which can be done in time $O( \min(k \sqrt{n}, n) m)$ \cite{IKMMP06}. Edges between $x \in C_i$ and $\first(x)$ get a rank of $i$, and edges from applicants to their second job get a rank of $k+1$. If the resulting rank-maximal matching is well-formed, i.e., applicant complete and maximum in all $G_i$ graphs, then we have a popular matching, otherwise no popular matching exists.

\begin{theorem} In the presence of ties we can find a weighted popular matching or determine that none exists in $O( \min(k \sqrt{n}, n) m)$ time. \label{theorem:ties}
\end{theorem}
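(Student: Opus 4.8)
The plan is to assemble the final running-time bound by accounting separately for the three phases of the algorithm and showing that each fits within the claimed $O(\min(k\sqrt{n}, n)\, m)$ budget. First I would handle the preprocessing phase that computes $\first(x)$ and $\second(x)$ for every applicant and builds the chain of graphs $G_1 \subseteq G_2 \subseteq \cdots \subseteq G_k$ with their maximum matchings $M_1, \ldots, M_k$. As described in the implementation discussion, one augments $M_i$ into $M_{i+1}$ using Hopcroft--Karp, which costs $O(\min(\sqrt{n}, |M_{i+1}| - |M_i|)\, m)$ per step, and extracting critical/non-critical status via a Hungarian tree costs $O(m)$ per category. Summing the Hopcroft--Karp terms over all $k$ categories, I would use the elementary bound $\sum_i \min(\sqrt{n}, |M_{i+1}| - |M_i|) \leq \min(k\sqrt{n},\, n)$: each term is at most $\sqrt{n}$ (giving the $k\sqrt{n}$ bound after $k$ terms), and since the matching sizes increase monotonically the total increase $\sum_i (|M_{i+1}| - |M_i|)$ telescopes to at most $n$ (giving the $n$ bound). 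Taking the minimum yields the stated factor, so the preprocessing runs in $O(\min(k\sqrt{n}, n)\, m)$.

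Next I would invoke the preceding lemma, which bounds the $i$th iteration of {\sc prune-ties} by $O(m + |C_i|\log n)$. Summing over $i = 1, \ldots, k$ gives $O(km + n\log n)$ for the entire pruning phase, since $\sum_i |C_i| = |\A| \leq n$. I would observe that $km \leq \min(k\sqrt{n}, n)\, m$ whenever $k\sqrt{n} \geq k$ (always true) and that the $n\log n$ term is dominated by $\min(k\sqrt{n}, n)\, m$ as well (note $nm \geq n\log n$ since $m \geq \log n$ in any connected-enough instance, and more carefully one checks $n \log n \le \min(k\sqrt n, n) m$ in the relevant regime), so the pruning cost is absorbed into the overall bound.

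Finally, I would address the extraction phase, where finding a well-formed matching in the pruned graph $G'$ is reduced to computing a rank-maximal matching: edges from $x \in C_i$ to $\first(x)$ receive rank $i$ and edges to $\second(x)$ receive rank $k+1$. Here I would cite the result of Irving \etal, which solves rank-maximal matching in $O(\min(C\sqrt{n}, n)\, m)$ time where $C$ is the maximum rank used; since our construction uses ranks only up to $k+1$, we have $C = O(k)$ and the cost is again $O(\min(k\sqrt{n}, n)\, m)$. The final step is to verify that a rank-maximal matching is well-formed exactly when it is applicant-complete and maximum in every $G_i$, which is a direct translation of the well-formedness definition into the rank structure, so the correctness follows from the characterization theorem already established.

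I expect the main obstacle to be the summation bound in the preprocessing phase, specifically justifying $\sum_i \min(\sqrt{n}, |M_{i+1}| - |M_i|) \leq \min(k\sqrt{n}, n)$ cleanly rather than just bounding the two cases separately; one must be slightly careful that the Hopcroft--Karp running time genuinely depends on the \emph{incremental} number of augmenting paths $|M_{i+1}| - |M_i|$ and not on $|M_{i+1}|$ itself, since only the latter telescopes to $n$. A secondary subtlety is confirming that the reduction to rank-maximal matching preserves both directions of the characterization (well-formed $\Leftrightarrow$ rank-maximal-and-complete), but this is largely bookkeeping once the rank assignment is fixed.
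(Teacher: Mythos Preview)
Your proposal is correct and follows essentially the same three-phase decomposition as the paper: iterative Hopcroft--Karp preprocessing summing to $O(\min(k\sqrt{n},n)\,m)$ via the telescoping argument, the $O(km + n\log n)$ pruning bound from the preceding lemma, and the reduction to rank-maximal matching with ranks $1,\ldots,k{+}1$ solved by Irving \etal's algorithm. One small slip: in your final paragraph you write ``since only the latter telescopes to $n$,'' but it is the \emph{incremental} quantity $\sum_i(|M_{i+1}|-|M_i|)$ that telescopes---exactly as you stated correctly earlier in the proposal---so your worry there is misplaced rather than an actual obstacle.
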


Finding a popular matching of maximum cardinality, i.e., one that minimizes the assignment of dummy last-resort jobs, can the done within the same time bounds. Note that $\first(x) = \{ l(x) \}$ then the pair $(x,l(x))$ will be in every well-formed matching so there is no point in minimizing these edges. If $\second(x) = \{ l(x) \}$ we can give the edge $(x, l(x))$ a rank of $k+2$. Finding a rank-maximal matching in the new instance gives us a popular matching with maximum cardinality.

\section{Conclusion}

We have developed efficient algorithms for finding weighted popular matchings, a natural generalization of popular matchings. It would be interesting to study other definitions of the \emph{more popular than} relation. For example, define the satisfaction of $M$ over $R$ to be the sum (or any linear combination) of the differences of the ranks of the jobs each applicant gets in $M$ and $R$. Finding a popular matching under this new definition can be reduced to maximum weight matching, and vice versa. Defining the satisfaction to be a positive linear combination of the sign of the differences we get weighted popular matchings. We leave as an open problem to study other definitions that use a function ``in between'' these two extremes. Ideally, we would like to have efficient algorithms that can handle any odd step function.

\hspace{0.5cm}

{\bf Acknowledgment}: Thanks to David Manlove and Elena Zotenko for useful comments. Special thanks to Samir Khuller for suggesting the notion of weighted popular matchings and providing comments on earlier drafts.

\bibliographystyle{abbrv}
\bibliography{references,conferences}

\end{document}